\patchcmd{\maketitle}{\@copyrightspace}{}{}{}
\algnewcommand\algorithmicinput{\textbf{Input:}}
\algnewcommand\Input{\item[\algorithmicinput]}
\algnewcommand\algorithmicoutput{\textbf{Output:}}
\algnewcommand\Output{\item[\algorithmicoutput]}
\newtheorem{definition}{Definition}
\newtheorem{theorem}{Theorem}
\newtheorem{lemma}{Lemma}
\newtheorem{proposition}{Proposition}
\newcommand{\SmallSpace}{\vspace*{-1.5ex}}
\newcommand{\baseline}{\textsf{\small Baseline}\xspace}
\newcommand{\privqt}{\textsf{\small PrivQT}\xspace}
\newcommand{\privthr}{\textsf{\small PrivTHR}\xspace}
\newcommand{\privthrem}{\textsf{\small PrivTHR$_{EM}$}\xspace}
\newcommand{\eat}[1]{}
\title{WaveCluster with Differential Privacy}
\author{
{Ling Chen{\small $~^{\#1}$}, Ting Yu{\small $~^{\#1,2}$}, Rada Chirkova{\small $~^{\#1}$} }%
\vspace{1.6mm}\\
\fontsize{10}{10}\selectfont\itshape
$^{\#1}$\,Department of Computer Science, North Carolina State University, Raleigh, USA\\
\fontsize{10}{10}\selectfont\itshape
$^{\#2}$\,Qatar Computing Research Institute, Doha, Qatar\\
\fontsize{9}{9}\selectfont\ttfamily\upshape
$^{1}$\,lchen10@ncsu.edu, $^{1,2}$\,tyu@\{ncsu.edu,qf.org.qa\}, $^{1}$\,rychirko@ncsu.edu
}
\begin{document}
\maketitle

\begin{abstract}
WaveCluster is an important family of grid-based clustering algorithms that are capable of finding clusters of arbitrary shapes. In this paper, we investigate techniques to perform WaveCluster while ensuring differential privacy. 
Our goal is to develop a general technique for achieving differential privacy on WaveCluster that accommodates different wavelet transforms.
We show that straightforward techniques based on synthetic data generation and introduction of random noise when quantizing the data, though generally preserving the distribution of data, often introduce too much noise to preserve useful clusters. We then propose two optimized techniques, \privthr and \privthrem, which can significantly reduce data distortion during two key steps of WaveCluster: the quantization step and the significant grid identification step. We conduct extensive experiments based on four datasets that are particularly interesting in the context of clustering, and show that \privthr and \privthrem achieve high utility when privacy budgets are properly allocated. 
\end{abstract}

\section{Introduction}\label{section-introduction}
Clustering is an important class of data analysis that has been extensively applied in a variety of fields, such as identifying different groups of customers in marketing and grouping homologous gene sequences in biology research~\cite{datamininghan}. Clustering results allow data analysts to gain valuable insights into data distribution when it is challenging to make hypotheses on raw data. Among various clustering techniques, a grid-based clustering algorithm called WaveCluster~\cite{wavecluster,waveclusterjournal} is famous for detecting clusters of arbitrary shapes. WaveCluster relies on wavelet transforms, a family of convolutions with appropriate kernel functions, to convert data into a transformed space, where the natural clusters in the data become more distinguishable.

In many data-analysis scenarios, when the data being analyzed contains personal information and the result of the analysis needs to be shared with the public or untrusted third parties, sensitive private information may be leaked, e.g., whether certain personal information is stored in a database or has contributed to the analysis. Consider the databases \textit{A} and \textit{B} in Figure~\ref{fig:privacy-breach}. These two databases have two attributes, \textit{Monthly Income} and \textit{Monthly Living Expenses}, and the records differ only in one record, \textit{u}. Without \textit{u}'s participation in database \textit{A}, WaveCluster identifies two separate clusters, marked by \textit{blue} and \textit{red}, respectively. With \textit{u}'s participation, WaveCluster identifies only one cluster marked by color \textit{blue} from database \textit{B}. Therefore, merely from the number of clusters returned (rather than which data points belong to which cluster), an adversary may infer a user's participation. Due to such potential leak of private information, data holders may be reluctant to share the original data or data-analysis results with each other or with the public.

\begin{figure}
\centering
\subfloat[]{\includegraphics[scale=0.36,clip]{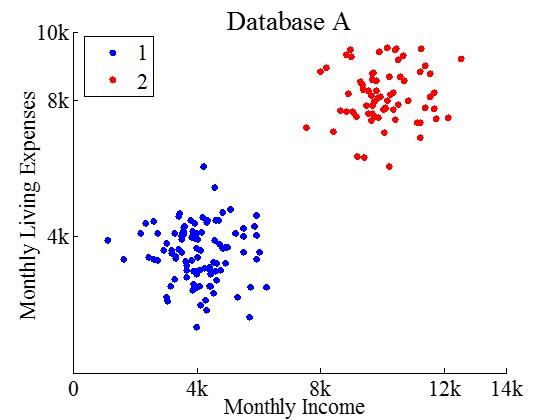} } 
\hspace*{-3ex}
\subfloat[]{\includegraphics[scale=0.37,clip]{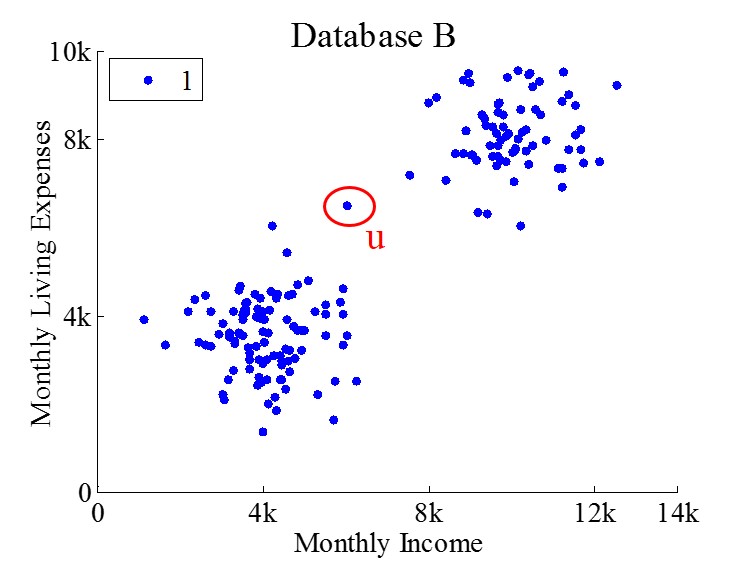} }
\vspace*{-0.5ex}
\caption{\label{fig:privacy-breach}Example of personal privacy breach in cluster analysis.} 
\vspace*{-4ex}
\end{figure}

In this paper, we develop techniques to perform WaveCluster with differential privacy~\cite{Dwork_survey,calibrating}. 
Differential privacy provides a provable strong privacy guarantee that the output of a computation is insensitive to any particular individual. In other words, based on the output, an adversary has limited ability to make inference about whether an individual is present or absent in the dataset. Differential privacy is often achieved by the perturbation of randomized algorithms,  and the privacy level is controlled by a parameter $\epsilon$ called ``privacy budget''. Intuitively, the privacy protection via differential privacy grows stronger as $\epsilon$ grows smaller.

WaveCluster provides a framework that allows any kind of wavelet transform to be plugged in for data transformation, such as the Haar transform~\cite{haar} and Biorthogonal transform~\cite{bior}.
There are various wavelet transforms that are suitable for different types of applications, such as image compression and signal processing~\cite{waveletreview}.
Plugged in different wavelet transforms, WaveCluster can leverage different properties of the data, such as frequency and location, for finding the dense regions as clusters.
Thus, in this paper, we aim to develop a general technique for achieving differential privacy on WaveCluster that accommodates different wavelet transforms.

\begin{figure}
\centering
\subfloat[Original]{\includegraphics[scale=0.33,clip]{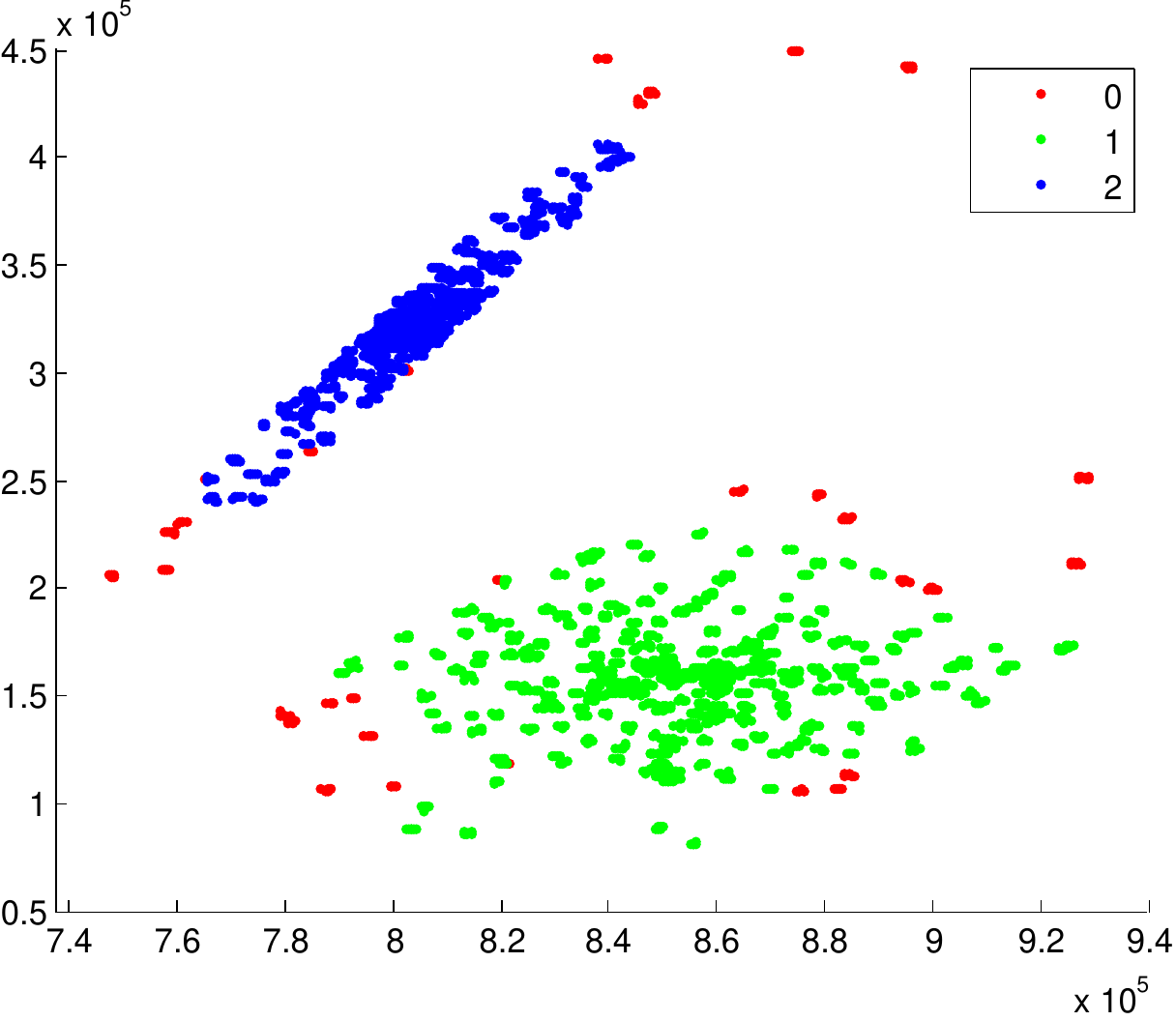} } 
\hspace*{1ex}
\subfloat[\baseline]{\includegraphics[scale=0.33,clip]{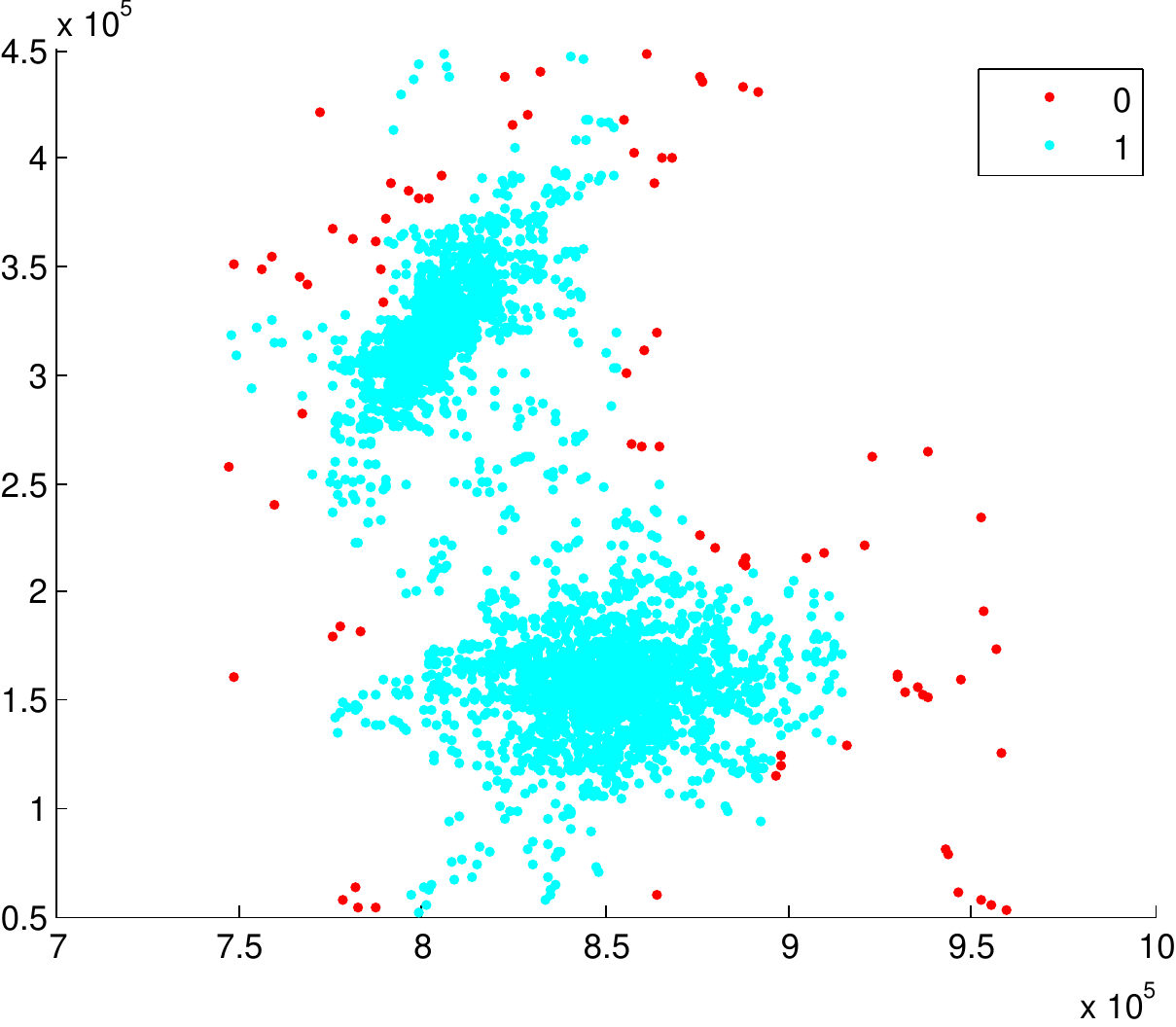} }
\vspace*{-0.5ex}
\caption{\label{fig:baselinediff}Inaccurate clustering result produced by \baseline. (a) shows the WaveCluster results on the original data
and (b) shows the WaveCluster results of \baseline,
which leverages the adaptive-grid~\cite{geospatial} approach to generate the synthetic data. Points in different clusters are shown in different colors,
and the points marked by \textit{red} are considered as noises that do not form a cluster.} 
\vspace*{-4ex}
\end{figure}

We first consider a general technique, \baseline, that adapts existing differentially private data-publishing techniques to WaveCluster through synthetic data generation. Specifically, we could generate synthetic data based on any data model of the original data that is published through differential privacy, and then apply WaveCluster using any wavelet transform over the synthetic data. \baseline seems particularly promising as many effective differentially private data-publishing techniques have been proposed in the literature, all of which strive to preserve some important properties of the original data. Therefore, hopefully the ``shape'' of the original data is also preserved in the synthetic data, and consequently could be discovered by WaveCluster. 
Unfortunately, as we will show later in the paper, this synthetic data-generation technique often cannot produce accurate results. Differentially private data-publishing techniques such as spatial decompositions~\cite{spatial}, adaptive-grid~\cite{geospatial}, and Privelet~\cite{privelet}, output noisy descriptions of the data distribution and often contain negative counts for sparse partitions due to random noise. These negative counts do not affect the accuracy of large range queries (which is often one of the main utility measures in private data publishing) since zero-mean noise distribution smoothes the effect of negative counts. However, negative counts cannot be smoothed away in the synthesized dataset, which are typically set to zero counts. 
Figure~\ref{fig:baselinediff} shows an example of inaccurate clustering results produced by \baseline using adaptive-grid~\cite{geospatial},
As we can see, the synthetic data generated in \baseline significantly distorts the data distribution, 
causing two clusters to be merged as one and reducing the accuracy of the WaveCluster results.

Motivated by the above challenge, we propose three techniques that enforce differential privacy on the key steps of WaveCluster, rather than relying on synthetic data generation. WaveCluster accepts as input a set of data points in a multi-dimensional space, and consists of the following main steps. First, in the quantization step WaveCluster quantizes the multi-dimensional space by dividing the space into grids, and computes the count of the data points in each grid. These counts of grids form a count matrix $M$. Second, in the wavelet transform step WaveCluster applies a wavelet transform on the count matrix $M$ to obtain the approximation of the multi-dimensional space. Third, in the significant grid identification step WaveCluster identifies significant grids based on the pre-defined density threshold. Fourth, in the cluster identification step WaveCluster outputs as clusters the connected components from these significant grids~\cite{hor88}. 
To enforce differential privacy on WaveCluster, we first propose a technique, \privqt, that introduces Laplacian noise to the quantization step. However, such straightforward privacy enforcement cannot produce usable private WaveCluster results, since the noise introduced in this step significantly distorts the density threshold for identifying significant grids. 
To address this issue, we further propose two techniques, \privthr and \privthrem, which enforce differential privacy on both the quantization step and the significant grid identification step. These two techniques differ in how to determine the noisy density threshold. We show that by allocating appropriate budgets in these two steps, both techniques can achieve differential privacy with significantly improved utility.

Traditionally, the effectiveness of WaveCluster is evaluated through visual inspection by human experts (i.e., visually determining whether the discovered clusters match those reflected in the user's mind)~\cite{wavecluster, waveclusterjournal}. Unfortunately, visual inspection is inappropriate to assess the utility of differentially private WaveCluster. Visual inspection is not quantitative, and thus it is hard to systematically compare the impact of different techniques through visual inspection. Generally, researchers use quantitative measures to assess the utility of differentially private results, such as relative or absolute errors for range queries and prediction accuracy for classification. But there is no existing utility measures for density-based clustering algorithms with differential privacy.

To mitigate this problem, in this paper we propose two types of utility measures. 
The first is to measure the dissimilarity between true and private WaveCluster results by measuring the differences of significant grids and clusters, which correspond to the outputs of the two key steps (the significant grid identification and the cluster identification) in WaveCluster.
To more intuitively understand the usefulness of discovered clusters, our second utility measure considers one concrete application of cluster analysis, i.e., to build a classifier based on discovered clusters, and then use that classifier to predict future data. Therefore the prediction accuracy of the classifier from one aspect reflects the actual utility of private WaveCluster.

To evaluate the proposed techniques, our experiments use four datasets containing different data shapes that are particularly interesting in the context of clustering~\cite{datasetlink, Gowalla}.
Our results show that \privthr and \privthrem achieve high utility for both types of utility measures, and are superior to \baseline and \privqt.

\section{Related Work}\label{section-relatedwork}
The syntactic approaches for privacy preserving clustering~\cite{clusteringsurvey} is to output $k$-anonymous clusters.
Friedman et al.~\cite{kanonymityforDM} presented an algorithm to output $k$-anonymous clusters by using minimum spanning tree.
Karakasidis et al.~\cite{privateblocking} created $k$-anonymous clusters by merging clusters so that each cluster contains at least $k$ key values of the records.
Fung et al.~\cite{datapublishingforclustering} proposed an approach that converts the anonymity problem for cluster analysis to the counterpart problem for classification analysis.
Aggarwal et al.~\cite{rgatherclustering} proposed a perturbation method called $r$-gather clustering, 
which releases the cluster centers, together
with their sizes, radiuses, and a set of associated sensitive values.
However, these approaches only satisfy syntactic privacy notions such as k-anonymity, 
and cannot provide formal guarantees of privacy as differential privacy.

In this work, our goal is to perform WaveCluster under differential privacy.
The focus of initial work on differential privacy~\cite{Dwork_survey,calibrating,LearnPrivately,Dwork_robust_stat,private_corset} concerned the theoretical proof of its feasibility on various data analysis tasks, e.g., histogram and logistic regression. 

More recent work has focused on practical applications of differential privacy for privacy-preserving data publishing. 
An approach proposed by Barak et al.~\cite{Barak} encoded marginals with Fourier coefficients and then added noise to the released coefficients.
Hay et al.~\cite{Hay} exploited consistency constraints to reduce noise for histogram counts.  
Xiao et al.~\cite{privelet} proposed \textit{Privelet}, which uses wavelet transforms to reduce noise for histogram counts.
Cormode et al.~\cite{spatial} indexed data by \textit{kd}-trees and \textit{quad}-trees, developing 
effective budget allocation strategies for building the noisy trees and obtaining noisy counts for the tree nodes. 
Qardaji et al.~\cite{geospatial} proposed uniform-grid and adaptive-grid methods to derive appropriate partition granularity in differentially private synopsis publishing.
Xu et al.~\cite{histogram} proposed the \textit{NoiseFirst} and \textit{StructureFirst} techniques for constructing optimal noisy histograms, using dynamic programming and Exponential mechanism. 
These data publishing techniques are specifically crafted for answering range queries.
Unfortunately, synthesizing the dataset and applying WaveCluster on top of it often render WaveCluster results useless, since these differentially private data publishing techniques do not capture the essence of WaveCluster and introduce too much unnecessary noise for WaveCluster. 

Another important line of prior work focuses on integrating differential privacy into other practical data analysis tasks, such as regression analysis, model fitting, classification and etc. 
Chaudhuri et al.~\cite{regression_nips} proposed a differentially private regularized logistic regression algorithm that balances privacy with learnability. 
Zhang et al.~\cite{functionalmechanism} proposed a differentially private approach for logistic and linear regressions that involve perturbing the objective function of the regression model, rather than simply introducing noise into the results. 
Friedman et al.~\cite{datamining1} incorporated differential privacy into several types of decision trees and subsequently demonstrated the tradeoff among privacy, accuracy and sample size.
Using decision trees as an example application, Mohammed et al.~\cite{datamining2} investigated a generalization-based algorithm for achieving differential privacy for classification problems.

Differentially private cluster analysis has also be studied in prior work. 
Zhang et al.~\cite{genetic} proposed differentially private model fitting based on genetic algorithms, with applications to \textit{k}-means clustering. 
McSherry~\cite{pinq} introduced the PINQ framework, which has been applied to achieve differential privacy for \textit{k}-means clustering using an iterative algorithm~\cite{damson}.
Nissim et al.\cite{smooth} proposed the sample-aggregate framework that calibrates the noise magnitude according to the smooth sensitivity of a function.
They showed that their framework can be applied to \textit{k}-means clustering under the assumption that the dataset is well-separated. 
These research efforts primarily focus on centroid-based clustering, such as \textit{k}-means, 
that is most suited for separating convex clusters and presents insufficient spatial information to detect clusters with complex shapes, e.g. concave shapes.
In contrast to these research efforts, we propose techniques that enforce differential privacy on WaveCluster, which is not restricted to well-separated datasets, and can detect clusters with arbitrary shapes.

\section{Preliminaries}\label{section-preliminaries}
In this section, we first present the background of differential privacy. Then we describe the WaveCluster algorithm followed by our problem statement.

\subsection{Differential Privacy}
\label{subsec:diffprivacy}
Differential privacy~\cite{Dwork_survey} is a recent privacy model, which guarantees that an adversary cannot infer an individual's presence in a dataset from the randomized output, despite having knowledge of all remaining individuals in the dataset.

\begin{definition}
\textbf{($\epsilon$-differential privacy)}: 
Given any pair of neighboring databases $D$ and $D'$ that differ only in one individual record, a randomized algorithm $A$ is $\epsilon$-differentially private iff for any $S \subseteq Range(A)$:
\vspace*{-2ex}
$$Pr[A(D) \in S] \leq Pr[A(D') \in S] * e^{\epsilon}$$
\vspace*{-4.5ex}
\end{definition} 

The parameter $\epsilon$ indicates the level of privacy. 
Smaller $\epsilon$ provides stronger privacy. 
When $\epsilon$ is very small, $e^\epsilon$ $\approx$ 1+ $\epsilon$. 
Since the value of $\epsilon$ directly affects the level of privacy, we refer to it as the \emph{privacy budget}.
Appropriate allocation of the privacy budget for a computational process is important for reaching a favorable trade-off between privacy and utility.
The most common strategy to achieve $\epsilon$-differential privacy is to add noise to the output of a function.
The magnitude of introduced noise is calibrated by the privacy budget $\epsilon$ and the sensitivity of the query function. 
The sensitivity of a query function is defined as the maximum difference between the outputs of the query function on any pair of neighboring databases.:

\vspace*{-2ex}
$$
\Delta f = \max_{D,D'}\parallel f(D) - f(D') \parallel_1
$$
\vspace*{-2ex}

There are two common approaches for achieving $\epsilon$-differential privacy: Laplace mechanism~\cite{calibrating} and Exponential mechanism~\cite{exponential}. 

{\bf Laplace Mechanism}: The output of a query function $f$ is perturbed by adding noise from the Laplace distribution with probability density function $f(x|b) = \frac{1}{2b}\exp(-\frac{|x|}{b})$, $b = \frac{\Delta f}{\epsilon}$. The following randomized mechanism $A_l$ satisfies  $\epsilon$-differential privacy:
\vspace*{-2ex}
$$
\mathcal{A}_l(D) = f(D) + Lap(\frac{\Delta f}{\epsilon})
$$
\vspace*{-2ex}

{\bf Exponential Mechanism}: This mechanism returns an output that is close to the optimum, with respect to a quality function.
A quality function $q(D, r)$ assigns a score to all possible outputs $r \in R$, where $R$ is the output range of $f$, and better outputs receive higher scores.
A randomized mechanism $A_e$ that outputs $r \in R$ with probability
\vspace*{-2ex}
$$
Pr[\mathcal{A}_e(D) = r] \propto exp(\frac{\epsilon q(D,r)}{2S(q)} )
$$
\vspace*{-2ex}
\\
satisfies $\epsilon$-differential privacy, where $S(q)$ is the sensitivity of the quality function. 

Differential privacy has two properties: sequential composition and parallel composition. 
Sequential composition is that given $n$ independent randomized mechanisms $A_1,A_2,\ldots,A_n$ where $A_i$ ($1\leq i \leq n$) satisfies $\epsilon_i$-differential privacy, a sequence of $A_i$ over the dataset $D$ satisfies $\epsilon$-differential privacy, where $\epsilon = \sum_1^n(\epsilon_i)$.
Parallel composition is that given $n$ independent randomized mechanisms $A_1,A_2,\ldots,A_n$ where $A_i$ ($1\leq i \leq n$) satisfies $\epsilon$-differential privacy, a sequence of $A_i$ over a set of disjoint data sets $D_i$ satisfies $\epsilon$-differential privacy.

\begin{figure}
\centering\includegraphics[scale=0.369,clip]{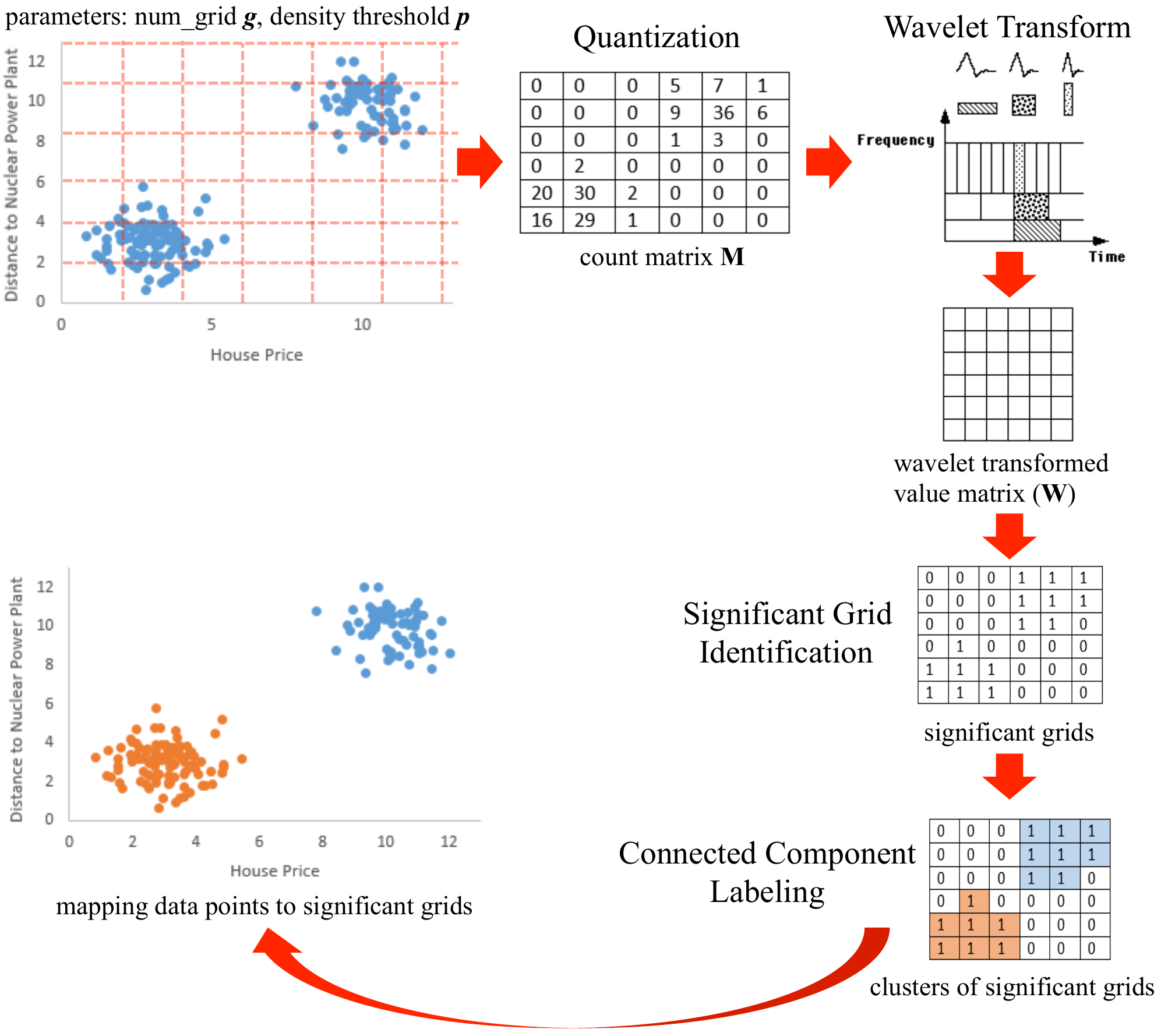} 
\vspace*{-1.5ex}
\caption{\label{fig:wavecluster}Illustration of WaveCluster.} 
\vspace*{-4ex}
\end{figure}

\subsection{WaveCluster}
\label{subsec:wavecluster}
WaveCluster is an algorithm developed by Sheikholeslami et al.~\cite{wavecluster,waveclusterjournal} for the purpose of clustering spatial data. 
It works by using a wavelet transform to detect the boundaries between clusters. 
A wavelet transform allows the algorithm to distinguish between areas of high contrast (high frequency components) and areas of low contrast (low frequency components). 
The motivation behind this distinction is that within a cluster there should be low contrast and between clusters there should be an area of high contrast (the border). 
WaveCluster has the following steps as shown in Figure~\ref{fig:wavecluster}:

\textbf{Quantization}:	Quantize the feature space into grids of a specified size, creating a count matrix $M$.

\textbf{Wavelet Transform}:	Apply a wavelet transform to the count matrix $M$, such as Haar transform~\cite{haar} and Biorthogonal transform~\cite{bior}, and decompose $M$ to the average subband that gives the approximation of the count matrix and the detail subband that has the information about the boundaries of clusters. 
We refer to the average subband as the wavelet-transformed-value matrix ($W$).

\textbf{Significant Grid Identification}: Identify the significant grids from the average subband $W$. 
WaveCluster constructs a sorted list $L$ of the positive wavelet transformed values obtained from $W$ and compute the $p$th percentile of the values in $L$. The values that are below the $p$th percentile of $L$ are non-significant values. Their corresponding grids are considered as non-significant grids and the data points in the non-significant grids are considered as noise.

\textbf{Cluster Identification}: Identify clusters from the significant grids using connected component labeling algorithm~\cite{hor88} (two grids are connected if they are adjacent), map the clusters back to the original multi-dimensional space, and label the data points based on which cluster the data points reside in.

In WaveCluster, users need to specify four parameters: 

	\textbf{num\_grid ($g_1,g_2,\ldots,g_n$)}: the number of grids that the $n$-\\dimensional space is partitioned into along each dimension.
	For the brevity of description, we simply use $g$ to refer to the partitions of the $n$-dimensional space ($g_1,g_2,\ldots,g_n$). 
	This parameter controls the scaling of quantization. 
	Inappropriate scaling can cause problems of over-quantization and under-quantization, affecting the accuracy of clustering~\cite{waveclusterjournal}.
	
	\textbf{density threshold ($p$)}: a percentage value $p$ that specifies $p$\% of the values in $L$ are non-significant values. For ease of presentation, we use $k = (1-p)|L|$ to represent the top $k$ values in $L$ and their corresponding grids are considered as significant grids.

	\textbf{level}: a wavelet decomposition level, which indicates how many times a wavelet transform is applied. The larger the level is, the more approximate the result is. 
	In our techniques, we set level to 1 since a smaller level value provides more accurate results~\cite{waveclusterjournal}.

	\textbf{wavelet}: a wavelet transform to be applied. Haar transform~\cite{haar} is one of the simplest wavelet transforms and widely used, which is computed by iterating difference and averaging between odd and even samples of a signal (or a sequence of data points). 
Other commonly used wavelet transforms include Biorthogonal transform~\cite{bior}, Daubechies transform~\cite{Daubechies}, and so on.

\textbf{Motivating Scenario.}
Consider a scenario with two participants: the data owner (e.g. hospitals) and the querier (e.g. data miner). 
The data owner holds raw data and has the legal obligation to protect individuals' privacy while the querier is eager to obtain cluster analysis results for further exploration. 
The goal of our work is to enable the data owner to release cluster analysis results using WaveCluster while not compromising the privacy of any individual who contributes to the raw data. 
The data owner has a good knowledge of the raw data and it is not difficult for her to pick the appropriate parameters (e.g. num\_grid, density threshold, and wavelet) for non-private WaveCluster.
For example, the data owner may draw from her past experience on similar data to determine the appropriate parameters for the current dataset. 
The parameters picked for the non-private setting are directly used for the private setting, 
and thus the data owner does not need to infer another set of parameters for the private setting.

\textbf{Problem Statement.}
Given a raw data set $D$, appropriate WaveCluster parameters for $D$ and a privacy budget $\epsilon$, 
our goal is to investigate an effective approach $A$ such that $A$ (1) satisfies $\epsilon$-differential privacy, 
and (2) achieves high utility of the private WaveCluster results 
with regard to the utility metrics $U$.

\section{Approaches}\label{section-approach}
In this section, we present four techniques for achieving differential privacy on WaveCluster.
We first describe the \baseline technique that achieves differential privacy through synthetic data generation.
We then describe three techniques that enforces differential privacy on the key steps of WaveCluster.

\subsection{Baseline Approach (\baseline)}
A straightforward technique to achieve differential privacy on WaveCluster is as follows: (1) adapt an existing $\epsilon$-differential privacy preserving data publishing method to get the noisy description of the data distribution in some fashion, such as a set of contingency tables or a spatial decomposition tree~\cite{histogram,privelet,spatial,geospatial}; (2) generate a synthetic dataset according to the noisy description; (3) apply WaveCluster on the synthetic dataset. 
We refer to this technique as \baseline, and its pseudocode is shown in Algorithm~\ref{alg:baseline}. 

\begin{algorithm}
\caption{\baseline}\label{alg:baseline}
\begin{algorithmic}[1]
\Input Dataset $D$, num\_grid $g$, density threshold $p$, wavelet transform $w$, differential privacy budget $\epsilon$
\Output A set of differentially private clusters
\Procedure{$Baseline$}{$D,g,p,w,\epsilon$}
   \State $D'$ = DiffPrivPublishing($D,\epsilon$)
	 \State $M'$ = Quantization($D',g$)
	 \State $W'$ = WaveletTransform($M'$,$w$)
   \State $L'$ = ConvertToPosSortedArray($W'$)
   \State $k'$ = $(1-p)|L'|$
   \State $d'$ = Top($k'$,$L'$)
   \State \textbf{return} ConnCompLabel($W',d'$)
\EndProcedure
\end{algorithmic}
\end{algorithm}

\baseline first leverages a $\epsilon$-differential privacy preserving data publishing method to 
obtain a noisy dataset $D'$ (Line 2)
and partitions $D'$ based on the number of grids $g$ to obtain the noisy count matrix $M'$ (Line 3).
\baseline then applies a wavelet transform on $M'$ to obtain $W'$ (Line 4).
$W'$ is then turned into a list $L'$ that keeps only positive values and the values in $L'$ is sorted in ascending order (Line 5).
With $L'$, $k'$ is computed based on the specified density threshold $p$ and the size of $L'$ (Line 6).
Finally, \baseline obtains $d'$ as the top $k'$th value in $L'$ (Line 7), where any value in $L'$ greater than $d'$ is considered as a significant value,
and applies the connected component labeling algorithm to identify clusters of significant grids (Line 8).

\textbf{Discussion.}
\baseline achieves differential privacy on WaveCluster through the achievement of differential privacy on data publishing. 
However, it does not produce accurate WaveCluster results in most cases.
The adapted $\epsilon$-differential privacy preserving data publishing method is designed for answering range queries. 
The noisy descriptions of the data distribution generated by the method may contain negative counts for certain partitions since the noise distribution is Laplacian with zero mean. 
These negative counts do not affect the range query accuracy too much since zero-mean noise distribution smooths the effect of noise. 
For example, a partition $p_1$ has the true count of 2 and the noisy count of -2, whose noise is canceled by another partition $p_2$ having the true count of 10 and the noisy count of 14 when both $p_1$ and $p_2$ are included in a range query. 
In particular, when the range query spreads large range of a dataset, a single partition with a noisy negative count does not affect its accuracy too much. 
However, when the method is used for generating a synthetic dataset, the noisy negative counts are reset as zero counts, causing the data distribution to change radically on the whole and further leading to the severe deviation in differentially private WaveCluster results.

\subsection{Private Quantization (\privqt)}
To address the challenge faced by \baseline, we propose techniques that enforce differential privacy on the key steps of WaveCluster.
Our first approach, called Private Quantization (\privqt), introduces independent Laplacian noise in the quantization step to achieve differential privacy. 
In the quantization step, the data is divided into grids and the count matrix $M$ is computed.
To ensure differential privacy in this step, we rely on the Laplace mechanism that introduces independent Laplacian noise to $M$. 
Clearly, if we change one individual in the input data, such as adding, removing or modifying an individual, there is at most one change in one entry of $M$.
According to the parallel composition property of differential privacy, the noise amount introduced to each grid is $Lap(\frac{1}{\epsilon})$, given a privacy budget $\epsilon$.
Since the following steps of WaveCluster are carried on using the differentially private count matrix $M'$, the clusters derived from these steps are also differentially private. 
Algorithm~\ref{alg:privqt} shows the pseudocode of \privqt. 
Except from the first step that introduces independent Laplacian noise to $M$ (Line 2), the other steps (Lines 3-7) are the same as \baseline.

\begin{algorithm}
\caption{\privqt}\label{alg:privqt}
\begin{algorithmic}[1]
\Input Dataset $D$, num\_grid $g$, density threshold $p$, wavelet transform $w$, differential privacy budget $\epsilon$
\Output A set of differentially private clusters
\Procedure{$PrivQT$}{$D,g,p,w,\epsilon$}
   \State $M'$ = PrivQuantization($D,g,\epsilon$)
	 \State $W'$ = WaveletTransform($M'$,$w$)
   \State $L'$ = ConvertToPosSortedArray($W'$)
   \State $k'$ = $(1-p)|L'|$
   \State $d'$ = Top($k'$,$L'$)
   \State \textbf{return} ConnCompLabel($W',d'$)
\EndProcedure
\end{algorithmic}
\end{algorithm}

Selecting the appropriate grid size (reflected by the parameter num\_grid $g$) in the quantization step strongly affects the accuracy of WaveCluster results~\cite{waveclusterjournal}, and also the differentially private \\WaveCluster results.
A small grid size (small $g$) causes more data points to fall into each grid and thus the count of data points for each grid becomes larger, which makes the count matrix $M$ resistant to Laplacian noise. 
However, the small grid size is not helpful for WaveCluster to detect clusters with accurate shapes and renders the results less useful. 
On the other hand, although posing a larger grid size on the data captures the density distribution of the data more clearly, it makes each grid's count too small and thus become sensitive to Laplacian noise, which dramatically affects the identification of significant grids and further the shapes of clusters. 
Our empirical results show that only when an appropriate grid size is given, differentially private WaveCluster results maintains high utility.

\textbf{Discussion.}
Although \privqt achieves differential privacy on the WaveCluster results, the noisy count matrix $M'$ and its resulting noisy $L'$ are significantly distorted and consequently the clustering results.
The reason is as follows. 
Given a specified percentage value $p$, \privqt computes $k'$ from the positive values in $W'$, where $W'$ is derived from $M'$, which is perturbed by Laplacian noise.
Laplacian distribution is symmetric and has zero-mean.
According to its randomness, approximately half of the zero-count grids become noisy positive-count grids due to positive noise while the remaining ones are turned into noisy negative-count grids due to negative noise.
These noisy positive-count grids may cause their corresponding wavelet transformed values in $W'$ to become positive (depending on the targeted wavelet transform), which will inappropriately participate in the computation of $k'$ and further distorts $k'$. 
Due to the dominating errors introduced by approximately half of zero-count grids becoming noisy positive-count grids, our empirical results show that the utility of private WaveCluster results by \privqt improves marginally even for a large privacy budget.

\subsection{Private Quantization with Refined Noisy Density Threshold (\privthr)}
The limitation of \privqt lies in the severe distortion of $k'$ by Laplacian noise introduced into count matrix $M'$.
To mitigate the distortion, we propose a technique, \privthr, which prunes a portion of noisy positive values in $W'$ to refine the computation of $k'$. 
Algorithm~\ref{alg:privthr} shows the pseudocode of \privthr.
 
\privthr first introduces random noise to the count matrix $M$, similar to \privqt, and obtains a noisy count matrix $M'$ (Line 2).
\privthr then applies a wavelet transform on $M'$ to obtain $W'$ (Line 3).
$W'$ is then turned into a list $L'$ that keeps only positive values and the values in $L'$ is sorted in ascending order (Line 4).
Thus, only the positive values in $W'$ will be used for computing $k'$ based on the specified density threshold $p$.
To reduce the distortion of $k'$, starting from the smallest noisy positive values in $L'$, \privthr discards the first $\frac{|Z|'}{2}$ values (Line 6), where $Z$ represents the non-positive (negative or zero) values in the $W$ and $|Z|'$ is a noisy estimate of $|Z|$ (Line 5).
The reason why \privthr removes $\frac{|Z|'}{2}$ values from $L'$ is based on the utility analysis (in Section~\ref{subsec:utility}) that approximately $\frac{|Z|}{2}$ non-positive values in $W$ are turned into positive values due to the randomness of Laplacian noise.
Since $|Z|$ partially describes the data distribution and releasing $|Z|$ without protection may leak private information, \privthr also introduces Laplacian noise to $|Z|$, ensuring the whole process correctly enforces differentially privacy (Lines 11-17).
The noise introduced to $|Z|$ depends on the wavelet transform used to compute $W$.
For example, if we use Haar transform for $n$-dimensional data, a value in $W$ is computed by applying average for two neighboring elements along each dimension.
Since any single change in the input only causes one entry of the count matrix $M$ to change by 1,
the change of $M$ causes at maximum one value in $W$ to change, and thus causes $|Z|$ to change by 1 at maximum, i.e., the sensitivity of $|Z|$ is 1\footnote{For other wavelet transforms that use circular convolutions, such as Biorthogonal transform, the sensitivity of $n$ depends on the count of positive values and the count of negative values in the matrix computed by the coefficient vector~\cite{bior}.}.
Finally, \privthr obtains $d'$ as the top $k'$th value in $L''$ (Line 8), where any value in $L''$ greater than $d'$ is considered as a significant value,
and applies the connected component labeling algorithm to identify clusters of significant grids (Line 9).

\begin{algorithm}
\caption{\privthr}\label{alg:privthr}
\begin{algorithmic}[1]
\Input Dataset $D$, num\_grid $g$, density threshold $p$, wavelet transform $w$, differential privacy budget $\epsilon$, allocation percentage $\alpha$
\Output A set of differentially private clusters
\Procedure{$PrivTHR$}{$D,g,p,w,\epsilon,\alpha$}
   \State $M'$ = PrivQuantization($D,g,\alpha \epsilon$)
   \State $W'$ = WaveletTransform($M'$,$w$)
   \State $L'$ = ConvertToPosSortedArray($W'$)
   \State $|Z|'$ = \Call{NoisyCountOfNonPosValues}{$D,g,w,(1-\alpha)\epsilon$}
   \State $L''$ = RemoveFrom($L'$,0,$\frac{|Z|'}{2}$)
   \State $k'$ = $(1-p)|L''|$
   \State $d'$ = Top($k'$,$L''$)
   \State \textbf{return} ConnCompLabel($W',d'$)
\EndProcedure
\Procedure{NoisyCountOfNonPosValues}{$D, g, w,\epsilon$}\label{noisyzero}
\State $M$ = Quantization($D,g$)
\State $W$ = WaveletTransform($M$,$w$)
\State $|Z|$ = CountOfNonPos($W$)
\State $|Z|'$ = $|Z|$ + $Lap(\frac{Sensitivity(n)}{\epsilon})$
\State \textbf{return} $|Z|'$
\EndProcedure
\end{algorithmic}
\end{algorithm}

\textbf{Budget Allocation.}
\privthr first introduces Laplacian noise in the quantization step using a privacy budget $\epsilon_1 = \alpha \epsilon$, where $0 < \alpha < 1$.
In the significant grid identification step, \privthr further introduces Laplacian noise to $|Z|$ using the remaining privacy budget $(1-\alpha)\epsilon$.
Based on utility analysis in Section~\ref{subsubsec:privthr}, $\epsilon_1$ requires a smaller amount of budget than $\epsilon_2$.
Our empirical results in Section~\ref{section-experiments} further show in detail the impact of $\alpha$ on clustering accuracy. 

\subsection{Private Quantization with Noisy Threshold using Exponential Mechanism (\privthrem)}
Besides pruning noisy positive values in $W'$, we propose an alternative technique that employs Exponential mechanism for deriving $k'$ from the sorted list of $L$. 
Algorithm~\ref{alg:privthrem} shows the pseudocode of \privthrem.

\privthrem first introduces Laplacian noise to the count matrix $M$, which is similar to \privqt and \privthr. 
After that, we obtain a noisy count matrix $M'$ (Line 2) and the corresponding $W'$ (Line 3).
Different from the previous two techniques that compute $k'$ from $W'$, \privthrem derives $k'$ from $W$ using Exponential mechanism (Lines 7-15). 
In this case, although the sorted list derived from $W'$ is severely distorted in \privthrem, the derivation of $k'$ from $W$ is not affected by the distorted $W'$ at all.
Given reasonable privacy budget, $k'$ derived from $W$ using Exponential mechanism is reasonably accurate, compared to the case when $k'$ is derived from $W'$.

The quality function fed into the Exponential mechanism is~\cite{spatial}: 
$$
q(L,X)=-|rank(x)-k|,
$$
where $L$ represents the sorted positive values in $W$ with $Min$ and $Max$ values (Line 10), and $X$ represents the possible output space, i.e., all the possible values in the range of $(0, Max]$. 
Given a $W$ with $m$ positive values and their relationships are $x_1\geq x_2 \geq \ldots \geq x_m$,
these $m$ values divide the range $(0, Max]$ into $m$ partitions: 
$(0,x_m],(x_{m-1},x_{m-2}],$ $\ldots,(x_{2},x_1]$, and the ranks for these partitions are $m$, $m-1$, $\ldots$, 2, 1.
For any $x \in (x_{i-1}, x_{i}]$, its rank is $rank(x_{i})$. 
For example, if $x \in (x_{2},x_1]$, $rank(x) = rank(x_1) = 1$.
Similar to \privthr, when using Haar transform, any single change in the input causes only one value in $W$ to change. 
Thus, at maximum one value will be added into or removed from $L$, causing the outcome of $q(L, X)$ to be changed by 1, i.e., the sensitivity of $q(L, X)$ is 1\footnote{Similar to \privthr, for other wavelet transforms that use circular convolutions, the sensitivity of $q(L, X)$ depends on the count of positive values and the count of negative values in the matrix computed by the coefficient vector~\cite{bior}.}.

Plugging in the above quality function into Exponential mechanism, we obtain the following algorithm:
for any value $x\in (0, Max]$, the Exponential mechanism (EM) returns $x$ with probability \\ $Pr[EM(L) = x] \propto exp(-\frac{\epsilon |rank(x)-k|}{2})$  (Line 12). 
Since all the values in a partition have the same probability to be chosen, a random value from the partition $Pt_i = (x_{i-1}, x_{i}]$ will be chosen with the probability proportional to $|Pt_{i}|*exp(-\frac{\epsilon}{2}|i-k|)$. 
In other words, once $k'$ is chosen, \privthrem further computes a uniform random value $d'$ from $Pt_{i}$ (Line 13), and any value in $L'$ greater than $d'$ is considered as a significant value.

\textbf{Budget Allocation.}
Similar to \privthr, the privacy budget is split between two steps: introduction of Laplacian noise in quantization and obtaining $k'$ using Exponential mechanism. 
Previous empirical experiments~\cite{spatial} on splitting budgets between obtaining noisy median and noisy counts suggest that, 30\% vs. 70\% budget allocation strategy performs best. 
Specifically, 70\% of budget is allocated for obtaining noisy count matrix $M'$ (Line 2) and the remaining budget is allocated for computing $k'$ (Line 4).

\begin{algorithm}
\caption{\privthrem}\label{alg:privthrem}
\begin{algorithmic}[1]
\Input Dataset $D$, num\_grid $g$, density threshold $p$, wavelet transform $w$, differential privacy budget $\epsilon$, allocation percentage $\alpha$
\Output A set of differentially private clusters
\Procedure{$PrivTHR_{EM}$}{$D, g, p, w,\epsilon,\alpha$}
   \State $M'$ = PrivQuantization($D,g,\alpha \epsilon$)
   \State $W'$ = WaveletTransform($M'$,$w$)
   \State $d'$ = \Call{NoisyDensityThreshold}{$D, g, p, w,(1-\alpha) \epsilon$}
   \State \textbf{return} ConnCompLabel($W',d'$)
\EndProcedure

\Procedure{NoisyDensityThreshold}{$D, g, p, w,\epsilon$}\label{noisythresholdbyEM}
\State $M$ = Quantization($D,g$)
\State $W$ = WaveletTransform($M$,$w$)
\State $L$ = ConvertToPosSortedArray($W$)
\State $k$ = $(1-p)|L|$
\State $k'$ = ExponentialMechanism($L$,$k$,$\epsilon$)
\State $d'$ = UniformRandom($L$, $k'$ )
\State \textbf{return} $d'$
\EndProcedure
\end{algorithmic}
\end{algorithm}

\section{Privacy and Utility Analysis}\label{section-theoretical}
In this section, we present the theoretical analysis of proposed techniques \privqt, \privthr and \privthrem.

\subsection{Privacy Analysis}
In this part we establish the privacy guarantee of \privqt, \privthr and \privthrem.

\begin{theorem}
\privqt is $\epsilon$-differentially private.
\end{theorem}
\begin{proof}
\privqt introduces independent Laplacian noise $Lap(\frac{1}{\epsilon})$ to grid counts, which are computed on disjoint datasets. 
According to the parallel composition property of differential privacy described in Section~\ref{subsec:diffprivacy}, 
the privacy cost depends only on the worst guarantee of all computations over disjoint datasets. 
Therefore, \privqt is $\epsilon$-differentially private.
\end{proof}

\begin{theorem}
\privthr is $\epsilon$-differentially private.
\end{theorem}
\begin{proof}
\privthr splits privacy budget into two parts. First, for private quantization, adding Laplacian noise $Lap(\frac{1}{\alpha\epsilon})$ achieves strict $\alpha\epsilon$-differential privacy. The proof is same as \privqt. 
Second, \privthr introduces Laplacian noise $Lap(\frac{1}{(1-\alpha)\epsilon})$ to the true count of non-positive values in $W$, which achieves $(1-\alpha)\epsilon$-differential privacy. 
Using the composition property of differential privacy, \privthr achieves $\epsilon$-differentially private since $\epsilon = \alpha\epsilon + (1-\alpha)\epsilon$.
\end{proof}

\begin{theorem}
\privthrem is $\epsilon$-differentially private.
\end{theorem}
\begin{proof}
Similar to \privthr,  \privthrem has two steps of randomization: private quantization and obtaining noisy density threshold $d'$. 
Private quantization achieves $\alpha\epsilon$-differential privacy according to Laplace mechanism and parallel composition property. Sampling noisy density threshold $d'$ by Exponential mechanism consumes budget of $(1-\alpha)\epsilon$, which achieves $(1-\alpha)\epsilon$-differential privacy. According to the composition property of differential privacy, \privthrem is $\epsilon$-differentially private.
\end{proof}

\eat{
\begin{theorem}
\privqt is $\epsilon$-differentially private.
\end{theorem}

\begin{theorem}
\privthr is $\epsilon$-differentially private.
\end{theorem}

\begin{theorem}
\privthrem is $\epsilon$-differentially private.
\end{theorem}
}

\subsection{Utility Analysis}
\label{subsec:utility}
In this section, we present utility guarantees of our algorithms (\privqt, \privthr and \privthrem) with theoretical analysis.
In WaveCluster, the step of significant grid identification determines the clustering results. 
In the private results of WaveCluster, \privqt, \privthr and \privthrem return a list of noisy significant grids. 
To quantify the utility of \privqt, \privthr and \privthrem, we consider finding significant grids whose wavelet transformed values surpass a threshold to be similar to finding the top-$k$ frequent itemsets whose frequencies surpass a threshold. 
In significant grid identification, $L$ is the list of positive wavelet transformed values from $W$ sorted in ascending order, $Z$ represents the set of zero values from $W$, and $k$ indicates the threshold position in $L$ and all the top-$k$ values in $L$ correspond to significant grids, where $k = (1-p)|L|$.
One parameter to specify $k$ is the density threshold $p$, which remains the same either with or without noise introduction. However, $|L|$, another parameter to determine $k$, will be changed to $|L'|$ under differential privacy, where $L'$ is the list of positive wavelet transformed values from $W'$ sorted in ascending order. $L$ is different from $L'$ since noise introduction might result in a portion of zero values in $Z$ becoming positive and a small portion of positive values in $L$ becoming non-positive. 

\subsubsection{Utility Analysis for \privqt.}
\label{subsubsec:privthr}
We first provide the analysis of difference between $k$ and $k'$ in \privqt.
In \privqt, the difference between $k'$ and $k$ depends on two factors: 
(1) a set of zero values in $Z$ becoming noisy positive, $Z'_p = \{W'_{Z} | W'_{Z} = W_Z + Noise, W_Z \in Z, W'_{Z} > 0\}$, where $W'_{Z}$ is the noisy value of zero value in $Z$,
and (2) a set of positive values in $L$ becoming noisy non-positive, $L'_n = \{W'_{L} | W'_{L} = W_{L} + Noise, W_{L} \in L, W'_{L} \leq 0\}$, where $W'_{L}$ is the noisy value of positive value in $L$.
That is, $k' = (1-p) (|L| + |Z'_p| - |L'_n|)$.

\textbf{Analysis of $|Z'_p|$.}
In \privqt, since we are adding $Lap(\frac{1}{\epsilon})$ noise to each grid count and the Haar transform computes the average from four adjacent grids, the noise added into a wavelet transformed value is the sum of four i.i.d. samples from the Laplace distribution.
The sum of $h$ i.i.d. Laplace distributions with mean 0 is the difference of two i.i.d. Gamma distributions~\cite{kotz2001laplace},
referred to as distribution $T$. 
Distribution $T$ is a polynomial in $|x|$ divided by $e^{|x|}$,
which is a symmetric function and thus the probability for distribution $T$ to produce positive values is $\frac{1}{2}$. 
Thus, the events of values in $Z$ adding positive noise from distribution $T$ conform to the Binominal Distribution with parameters $|Z|$ and $\frac{1}{2}$ and its expected value is $\frac{|Z|}{2}$.

\eat{\textbf{Analysis of $|L'_n|$.} Empirically, $|L'_n|$ is a small constant. 
For $L'_n$, each value is added the noise conforming to the symmetric distribution $T$, the sum of four i.i.d. samples from the Laplace distribution with mean 0,
and the probability density function of $|L'_n|$ is
$f_{|L'_n|}(x) = {|L|\choose x} \prod_{i=1}^{x} f_T(y \leq -W_{i}) \prod_{i=x+1}^{|L|}f_T(y > -W_{i}), W_{i} \in L$.
As the values in $L$ are all positive, 
the minimum value in $L$ is greater or equal to 0.5, with the sum of four grid counts being the minimum value 1.
Based on the probability definition function of $T$, when $\epsilon = 1$,
the probability of sampling a value less than -0.5 is less than 0.16,
and probability of sampling a value less than -5 is less than 0.03.
Thus, with added noise from $T$,
it is very unlikely for those values in $L$ corresponding to the center of clusters to join $L'_n$,
and only those values in $L$ corresponding to the border of clusters may join $L'_n$.
Based on this observation, 
we can approximate
$f_{|L'_n|}(x) \approx {|L_b|\choose x} \prod_{i=1}^{x} f_T(y \leq - W_{i}) \prod_{i=x+1}^{|L_b|}f_T(y > -W_{i}) < {|L_b|\choose x} \prod_{i=1}^{x} f_T(y \leq - W_{i})$, where $L_b$ is the subset of $L$ and contains small values whose corresponding grids lie on the border of clusters, i.e., $W_{i} \in L_b$ and $W_{i} \leq 5$.
Empirically, $|L_b|$ is less than 10?\% of $|L|$(\textbf{we will add empirical results later}).
Although the factor ${|L_b|\choose x}$ is increasing when $x < \frac{|L_b|}{2}$, the product $\prod_{i=1}^{x} f_T(y \leq - W_{i})$ is exponentially decreasing when $x$ increases.
For example, let $|L_b| = 5, x = 4$, ${|L_b|\choose x} = 10$, $\prod_{i=1}^{x} f_T(y \leq - W_{i}) < 0.03^4 = 27*10^{-8}$, which is very close to 0 in probability.
Therefore, in probability, the value of $|L'_n|$ greater than 3 is very unlikely (\textbf{we can use empirical results to support it later}).

Also, the expected value of $T$ is 0,
as the expected value for an independent Laplacian random variable is 0. 
For all values $x \in L$, the expected value of $x$ plus the noise sampled from $T$, i.e., ($x$ + $T$), is still $x$.
The expected value of the ranking for $x \in L$ maintains the same.
As $L'_n$ is either an empty set or non-empty set, the expected value of $|L'_n|$ is close to 0, which means that in average the number of positive values in $L$ becoming noisy non-positive is approximately 0.
}

\eat{
\begin{proposition}
The expected value of $|L'_n|$ is $\theta = \sum_{n=1}^{|L|}\\ 
f_T(y \leq -W_{i})$, where $W_{i}$ is wavelet transformed value in $L$ and $T$ is a symmetric distribution combined by four i.i.d. Laplace distributions.
\end{proposition}
}

\textbf{Analysis of $|L'_n|$.}
For $L'_n$, each value is added the noise conforming to the symmetric distribution $T$.
The probability density function of $|L'_n|$ is
$f_{|L'_n|}(x) = {|L|\choose x} \prod_{i=1}^{x} f_T(y \leq -W_{i}) \prod_{i=x+1}^{|L|}\\
f_T(y > -W_{i}), W_{i} \in L$,
and its expected value $E[|L'_n|]$ is $\sum_{n=1}^{|L|} \\
f_T(y \leq -W_{i})$.
$E[|L'_n|]$ is large when $W_i$ is small and there is limited privacy budget.
Consider an extreme case that might not be suitable for clustering. 
All the positive values in $L$ are the minimum value 0.5 due to the sum of adjacent four grid counts being the minimum value 1,
resulting in a high $E[|L'_n|]$.
Clustering, especially WaveCluster algorithm, is useful when the dataset has dense areas (clusters) and empty areas (gap between clusters).
Such extreme case is not suitable for clustering since its data distribution is close to uniform distribution.
Those datasets that are interesting for clustering always have highly dense cluster centers and cluster borders with low density.
Only those values corresponding to border grids are possible to become noisy non-positive and the size of border grids is relatively small. 
Therefore, $E[|L'_n|]$ is a small constant.
We refer to the value of $|L'_n|$ as $\theta$ in the following analysis.

\textbf{Analysis of $k'-k$.}
In \privqt, $E[k'-k]=(1-p)(\frac{|Z|}{2} - \theta)$. There are two extreme cases when $k'-k \approx 0$.
For one extreme, $|Z| = |L|$ and all the positive values in $L$ is the minimum value 0.5.
When $\epsilon = 1$, $\theta \approx 0.43|L| \approx \frac{|Z|}{2}$, which makes $k'-k \approx 0$.
For another extreme, $|Z| = 0$, all the positive values in $L$ is large, e.g. $\geq 15$. 
When $\epsilon = 1$, $\theta \approx 0$ and $k'-k \approx 0$.
For those datasets that are interesting in the context of clustering, $|Z|$ is pretty large compared to the whole space since $Z$ is used to separate different clusters. 
What is more, dense areas within clusters are typically larger than the space of cluster borders with low density, i.e. $\theta$ is far smaller than $\frac{|Z|}{2}$.
In \privqt, $\frac{|Z|}{2}$ dominates the difference between $k'-k$, which increases false positive rate.
In \privthr and \privthrem, we use different strategies to minimize the difference between $k'-k$.

\eat{
\begin{lemma}
In \privqt, the expected value of $k'$ is $(1-p)(|L| + \frac{|Z|}{2} - \theta)$.
\end{lemma}
}

\begin{theorem}
In \privqt with Haar transform, given $0<\omega < 1$, let $\eta_1 = \frac{|Z|}{2} - \sqrt{\frac{|Z|\ln{(\frac{1}{\omega})}}{2}}$, $\eta_2=\frac{|Z|}{2} + \sqrt{\frac{|Z|\ln{(\frac{1}{\omega})}}{2}}$, and $\gamma = \frac{8}{\epsilon}\ln{(\frac{4(|L|+|Z|)}{\omega})}$, then with probability at least $(1-\omega)^2$,
(1) all values in $L$ greater than $W_{k_{min}'} + \gamma$ are output, where $k_{min}'=k+(1-p)(\eta_1-\theta)$, and
(2) no values in $L$ less than $W_{k_{max}'} - \gamma$ are output, where $k_{max}'=k+(1-p)(\eta_2-\theta)$.
\end{theorem}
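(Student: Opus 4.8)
The plan is to derive both one-sided guarantees from two essentially independent sources of randomness: the number $|Z'_p|$ of zero-valued cells whose Laplace noise flips them positive, which controls where the noisy rank $k'$ lands, and the magnitude of the per-cell perturbation $T$, which controls how far a true value in $L$ can move relative to the noisy threshold $d'$. I would isolate one ``rank'' event and one ``margin'' event, show each holds with probability at least $1-\omega$, and take their product $(1-\omega)^2$.

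\emph{Controlling $k'$.} The utility analysis already gives $|Z'_p|\sim$ Binomial$(|Z|,\tfrac12)$ with mean $\tfrac{|Z|}{2}$, and $k'=(1-p)(|L|+|Z'_p|-|L'_n|)$. Applying Hoeffding's inequality with deviation $t=\sqrt{|Z|\ln(1/\omega)/2}$ gives, for each single tail, $\Pr[\,|Z'_p|\le\eta_1\,]\le\omega$ and $\Pr[\,|Z'_p|\ge\eta_2\,]\le\omega$, since $2t^2/|Z|=\ln(1/\omega)$. Replacing $|L'_n|$ by its expectation $\theta$, the bound $|Z'_p|\ge\eta_1$ forces the rank $k'\ge k'_{min}$ (so the threshold sits no higher than the rank-$k'_{min}$ value of $L$), while $|Z'_p|\le\eta_2$ forces $k'\le k'_{max}$. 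Claim (1) uses only the first tail and Claim (2) only the second, so each claim pays a single factor $1-\omega$ here.

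\emph{The margin $\gamma$.} For the perturbation, $T$ is the sum of four i.i.d. $Lap(1/\epsilon)$ variables, so $\Pr[\,|T|\ge a\,]\le 4\,e^{-a\epsilon/4}$; setting the right-hand side to $\omega/(|L|+|Z|)$ yields exactly $a=\gamma/2$. A union bound over the at most $|L|+|Z|$ transformed values then shows that, with probability at least $1-\omega$, every noisy value differs from its true value by less than $\gamma/2$. Because both a candidate value $W_i$ and the threshold $d'$ carry such noise, their combined slack is $2\cdot\tfrac{\gamma}{2}=\gamma$: on this event, any true value exceeding the rank-$k'_{min}$ value by more than $\gamma$ stays above $d'$ and is output (Claim 1), and any true value below the rank-$k'_{max}$ value by more than $\gamma$ stays below $d'$ and is suppressed (Claim 2). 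Multiplying the rank factor and the margin factor gives $(1-\omega)^2$ for each claim.

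The step I expect to be the main obstacle is relating the noisy threshold $d'$ --- which is the $k'$-th largest entry of the \emph{mixed} list $L'$ --- back to a concrete value in the original list $L$. Since $L'$ interleaves the noisy positives inherited from $L$ with the flipped zeros $Z'_p$, I must argue that the $Z'_p$ entries (each a zero plus noise, hence small) lie below the threshold band, so that the top $k'$ entries of $L'$ are genuine $L$-cells and $d'$ agrees with the rank-$k'$ value of $L$ up to the $\gamma/2$ margin already established. This is precisely where the clustering-friendly structure of the data (dense cores separated by a sparse $Z$, thin borders) and the approximation $|L'_n|\approx\theta$ must be invoked, and it is the part of the argument that most needs careful justification rather than routine tail bounds.
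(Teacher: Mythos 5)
Your proposal follows essentially the same route as the paper's proof: the same decomposition into a rank event (Hoeffding on the Binomial$(|Z|,\tfrac12)$ count $|Z'_p|$ giving $\eta_1,\eta_2$) and a margin event (union bound over the Laplacian noise in the $|L|+|Z|$ transformed values giving $\gamma$), multiplied to get $(1-\omega)^2$. The difficulty you flag at the end --- relating the $k'$-th entry of the mixed list $L'$ back to a rank in $L$ --- is exactly the step the paper itself leaves out (``Detailed proof is omitted here''), so your instinct about where the real work lies is sound.
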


\begin{proof}
In \privqt, $k' = (1-p)(|L| + |Z'_p| - |L'_n|)$.
Since $|Z'_p|$ follows Binominal distribution with parameters $|Z|$ and $\frac{1}{2}$ and $|L'_n|$ is noted as a small value $\theta$, $k'$ follows the Binomial distribution and decides the number of values in $L$ that become output.
Given $\omega$, we can derive $k'$'s lower bound $k'_{min}$,
and show that values greater than $W_{k_{min}'} + \gamma$ are output, i.e., subclaim (1). 
Let $1-\omega = Pr(k' \geq k_{min}') = Pr(|Z'_p| \geq \eta_1)$.
As $Pr(|Z'_p| \geq \eta_1)$ = 1 - $Pr(|Z'_p| \leq \eta_1)$
and $Pr(|Z'_p| \leq \eta_1) \leq e^{(-2\frac{(\frac{|Z|}{2} - \eta_1)^2}{|Z|})}$~\cite{probabilisticmethod}, 
we have $\eta_1  = \frac{|Z|}{2} - \sqrt{\frac{|Z|\ln{(\frac{1}{\omega})}}{2}}$.
For constant $\omega$, $\eta_1 = O(\frac{|Z|}{2} - \sqrt{\frac{|Z|}{2}})$ will suffice.

Similar as $k'_{min}$,
we can also derive the bound $\gamma$ of the noise added to each value in $L \cup Z$ based on $\omega$.
For Haar wavelet transform, each value in $L \cup Z$ is added the noise 
that is the sum of 4 Laplacian random variables divided by 2 (i.e., $\frac{4Lap(\frac{1}{\epsilon})}{2}$).
For values in $L \cup Z$,
let all $4(|L|+|Z|)$ Laplacian random variables generate noise within $[-\frac{\gamma}{4},\frac{\gamma}{4}]$.
The probability that no Laplacian random variable' value is outside $[-\frac{\gamma}{4},\frac{\gamma}{4}]$ is $1- Pr(A)$, 
where $A$ is that at least one Laplacian random variable's value is outside $[-\frac{\gamma}{4},\frac{\gamma}{4}]$.
By union bound, $Pr(A) \leq \sum_{i=1}^{4(|L|+|Z|)} Pr(B_i)$, where $B_i$ is that $i$th Laplacian random variable's noise is outside $[-\frac{\gamma}{4},\frac{\gamma}{4}]$ and $Pr(B_i)= e^{-\frac{\epsilon \gamma}{8}}$.
Thus, we can derive that with at least the probability $1-4(|L|+|Z|)e^{-\frac{\epsilon \gamma}{8}}$,
no Laplacian random variable' value is outside $[-\frac{\gamma}{4},\frac{\gamma}{4}]$,
and each value in $L \cup Z$ has their noise amount within $[-\frac{\gamma}{2},\frac{\gamma}{2}]$.
Let $\omega = 4(|L|+|Z|)e^{-\frac{\epsilon \gamma}{8}}$, 
then $-\frac{\epsilon \gamma}{8} = \ln{(\frac{\omega}{4(|L|+|Z|)})}$
and we have $\gamma = \frac{8}{\epsilon}\ln{(\frac{4(|L|+|Z|)}{\omega})}$.
For constant $\omega$, $\gamma = O(\frac{ln{(4(|L|+|Z|))}}{\epsilon})$.

Subclaim (1) can be derived based on 
(a) with probability at least $1-\omega$, $k' \geq k_{min}'$
and (b) with probability at least $1-\omega$, 
the noise of each value in $L$ being within $[-\frac{\gamma}{2}, \frac{\gamma}{2}]$.
Detailed proof is omitted here.
Subclaim (1) requires both conditions (a) and (b) to hold,
and thus the probability is at least $(1-\omega)^2$.

We can derive the upper bound $k_{max}'$ of $k'$ given $\omega$.
Let $1-\omega = Pr(k' \leq k_{max}') = Pr(|Z'_p| \leq \eta_2)$.
Recall that $|Z'_p|$ follows Binomial distribution $(|Z|, \frac{1}{2})$,
and Binomial distribution $(|Z|, \frac{1}{2})$ is symmetric with respect to $\frac{|Z|}{2}$. 
Thus, the probability of sampling a value from the range $[0, \eta_2]$ is the same as 
sampling a value from the range $[\eta_1, |Z|]$,
and we have $\eta_2 = |Z|-\eta_1=\frac{|Z|}{2} + \sqrt{\frac{|Z|\ln{(\frac{1}{\omega})}}{2}}$.
For constant $\omega$, $\eta_2 = O(\frac{|Z|}{2} + \sqrt{\frac{|Z|}{2}})$ will suffice.

Subclaim (2) can be proved based on 
(c) with probability at least $1-\omega$, $k' \leq k_{max}'$
and (b) with probability at least $1-\omega$, 
the noise of each value in $L$ being within $[-\frac{\gamma}{2}, \frac{\gamma}{2}]$.
As subclaim (2) requires both conditions (c) and (b) to hold,
the probability is at least $(1-\omega)^2$. 
\end{proof}
For other wavelet transforms that use circular convolutions, such as Biorthogonal transform, the derivation for the bounds of $k'$ with $\eta_1$ and $\eta_2$ remains the same since $|Z'_p|$ following Binomial distribution is independent of any wavelet transform being adapted.
Thus, our framework is extensible to other wavelet transforms, and the bound of noise magnitude $\gamma$ depends on the amount of adjacent grid counts involved in computing a wavelet transformed value. 

\eat{
Given a confidence $1-\omega$, we can provide a range of values $[a,b]$ for $|Z'_p|$ whose cumulative probability ($Pr(x \leq b)-Pr(x \leq a)$) is $1-\omega$.
For any value whose rank is larger than $k$th makes to output, e.g. rank $k+2$, it is considered as a false positive (FP).
For any value whose rank is smaller than $k$th does not make to output, e.g. rank $k-2$, it is considered as a false negative (FN).
Sub-claim (1) considers the bound of false positive while sub-claim (2) considers the bound of false negative.
Given a range $[a, b]$ for $|Z'_p|$, false positive achieves its maximum when $|Z'_p| = b$ while false negative achieves its maximum when $|Z'_p| = a$.

For sub-claim (1), to maximize the utility guarantee (i.e., to make the number of false positive smaller), we need to minimize the maximum value $b$ of the range $[a, b]$, and thus we use the range $[0,\eta_1]$ , as $\eta_1$ is the minimum value of $b$ with regard to the confidence $1-\omega$.
For sub-claim (2), to maximize the utility guarantee (i.e., to make the number of false negative smaller), we need to maximize the minimum value $a$ of the range $[a, b]$, and thus we use the range $[\eta_2,|Z|]$, as $\eta_2$ is the maximum value of $a$ with regard to the confidence $1-\omega$.}

\eat{
\textsc{Theorem} 4. In \privqt, the expected value of false positive rate $FPR \approx \frac{(1-p)(\frac{|Z|}{2})}{|L|+|Z|}$.

\textbf{Proof.}
For all values in $L$ whose values are smaller than $W_{k}$ and all the values in $Z$, if any of them becomes the top $k'$ values after added noise from $T$, then it becomes a false positive.
From PROPOSITION 2, we know that for all values $x \in L$, the expected value of $x$ plus the noise sampled from $T$, i.e., ($x$ + $T$), is still $x$.
Thus, on average, the rankings between all values $x \in L$ stay the same, and only the values between $k'$ and $k$ would cause false positives. 
From LEMMA 1, we know that the expected value of $k'$ is approximate to $(1-p)(|L| + \frac{|Z|}{2})$, and thus the expected value of $FPR \approx \frac{(k' - k)}{|L|+|Z|} = \frac{(1-p)(\frac{|Z|}{2})}{|L|+|Z|}$.
}

\subsubsection{Utility Analysis for \privthr.}
\label{subsubsec:privthr}
\eat{
\begin{lemma}
In \privthr, the expected value of $k'$ is $(1-p)(|L| - \theta)$, which is close to $k$.
\end{lemma}

\textsc{Theorem} 5. In \privthr, the expected value of false positive rate $FPR$ is close to 0.

Based on \textsc{Theorem} 4, the expected value of $FPR$ depends on the expected value of the difference between $k'$ and $k$. 
From LEMMA 2, we can deduce that the expected value of $FPR$ for \privthr is $FPR \approx 0$.
}

\begin{theorem}
In \privthr with Haar transform, given $0<\omega < 1$, let $\eta_1 = \frac{|Z|}{2} - \sqrt{\frac{|Z|\ln{(\frac{1}{\omega})}}{2}}$, $\eta_2=\frac{|Z|}{2} + \sqrt{\frac{|Z|\ln{(\frac{1}{\omega})}}{2}}$, $\gamma = \frac{8}{\epsilon_1}\ln{(\frac{4(|L|+|Z|)}{\omega})}$ and $\beta = \frac{2}{\epsilon_2}\ln{(\frac{1}{\omega})}$, then with probability at least $(1-\omega)^3$,
(1) all values in $L$ greater than $W_{k_{min}'} + \gamma$ are output, where $k'_{min} = k+(1-p)(\eta_1-\theta-\frac{|Z|}{2}-\beta)$, and
(2) no values in $L$ less than $W_{k_{max}'} - \gamma$ are output, where $k'_{max} = k+(1-p)(\eta_2-\theta-\frac{|Z|}{2}+\beta)$.
\end{theorem}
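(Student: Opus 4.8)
The plan is to mirror the preceding \privqt utility analysis, treating \privthr as that argument augmented by the deterministic pruning step and the extra Laplacian noise injected into $|Z|$. The starting point is to rewrite $k'$ in closed form. In \privthr we have $|L''| = |L'| - \frac{|Z|'}{2}$, and since $|L'| = |L| + |Z'_p| - |L'_n|$ exactly as in \privqt, and $|Z|' = |Z| + Lap(\frac{1}{\epsilon_2})$, I obtain
$$k' = (1-p)\Bigl(|L| + |Z'_p| - |L'_n| - \tfrac{|Z|}{2} - \tfrac{1}{2}Lap(\tfrac{1}{\epsilon_2})\Bigr).$$
The deterministic term $-\frac{|Z|}{2}$ is precisely the bias that \privthr is designed to remove: it cancels the expectation $\frac{|Z|}{2}$ of $|Z'_p|$, so that in expectation $k'$ returns to roughly $k-(1-p)\theta$, close to $k$. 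The remaining task is to control the three independent sources of randomness that perturb $k'$ and the reported values.

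Second, I would bound two of these sources by reusing the \privqt arguments essentially verbatim. First, $|Z'_p|$ still follows the Binomial$(|Z|,\tfrac12)$ distribution, independent of the wavelet transform, so the same Hoeffding-type tail bound yields $|Z'_p| \geq \eta_1$ and $|Z'_p| \leq \eta_2$, each with probability at least $1-\omega$, with $\eta_1,\eta_2$ unchanged. Second, the per-value quantization noise is again a sum of four Laplace variables scaled by $\tfrac12$, so the same union bound over the $4(|L|+|Z|)$ underlying variables shows that every value in $L\cup Z$ is perturbed by at most $\gamma$ in magnitude with probability at least $1-\omega$; the only change is that the quantization budget is now $\epsilon_1$, which is why $\gamma$ carries $\epsilon_1$ rather than $\epsilon$. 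As in \privqt, $|L'_n|$ is treated as the small constant $\theta$ and does not contribute a separate probability event.

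Third—the genuinely new ingredient—I would add a Laplace tail bound for the noise on $|Z|$. Since $\tfrac12 Lap(\tfrac{1}{\epsilon_2})$ governs the gap between $\tfrac{|Z|'}{2}$ and $\tfrac{|Z|}{2}$, choosing $\beta = \frac{2}{\epsilon_2}\ln(\frac{1}{\omega})$ makes both $\Pr[\tfrac12 Lap(\tfrac{1}{\epsilon_2}) > \beta]$ and $\Pr[\tfrac12 Lap(\tfrac{1}{\epsilon_2}) < -\beta]$ at most $\omega$ (indeed far smaller—the constant is deliberately loose). Substituting $|Z'_p|\geq\eta_1$, $|L'_n|=\theta$, and $\tfrac12 Lap(\tfrac{1}{\epsilon_2})\leq\beta$ into the closed form gives $k'\geq k'_{min}$; the opposite tails give $k'\leq k'_{max}$. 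Subclaim (1) then requires the count event $k'\geq k'_{min}$, the value-noise event (each value in $L$ perturbed by at most $\gamma$), and now the $|Z|$-noise event, to hold simultaneously, so the probability is at least $(1-\omega)^3$ rather than the $(1-\omega)^2$ of \privqt. Subclaim (2) is symmetric, using the upper tails.

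I expect the main obstacle to be bookkeeping rather than a new idea. I must verify that pruning the smallest $\frac{|Z|'}{2}$ entries of $L'$ changes only the count and not the identity of the top-$k'$ values—so that the threshold $d'$ remains the $k'$-th largest value of $L'$ and the value-noise bound $\gamma$ applies unchanged—and that the three randomization events (Binomial sampling of $|Z'_p|$, Laplace quantization noise, and Laplace noise on $|Z|$) are genuinely independent, so their failure probabilities multiply cleanly to give $(1-\omega)^3$. The extensibility remark for circular-convolution transforms carries over, since $|Z'_p|$ being Binomial is transform-independent and only $\gamma$ depends on how many grid counts are combined per wavelet value.
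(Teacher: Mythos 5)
Your proposal is correct and follows essentially the same route as the paper's proof: write $k'$ in closed form, reuse the Binomial tail bounds for $\eta_1,\eta_2$ and the union bound for $\gamma$ from the \privqt{} theorem (with $\epsilon_1$ in place of $\epsilon$), add one Laplace tail event of width $\beta$ for the noise on $|Z|$, and multiply the three event probabilities to get $(1-\omega)^3$. Your bookkeeping is in fact slightly more careful than the paper's --- you correctly carry the factor $\tfrac12$ on the $Lap(\tfrac{1}{\epsilon_2})$ term from the $\tfrac{|Z|'}{2}$ pruning, and you note that pruning from the bottom of $L'$ leaves the top-$k'$ values intact --- neither of which changes the stated bounds.
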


\begin{proof}
In \privthr, we allocate $\epsilon_1$ for private quantization and $\epsilon_2$ for protecting $\frac{|Z|}{2}$, 
which makes $k' = (1-p) (|L| + |Z'_p| -\theta - \frac{|Z|}{2} + Lap(\frac{1}{\epsilon_2}))$.
With the probability at least $1-e^{-\frac{\epsilon_2 \beta}{2}}$,
$Lap(\frac{1}{\epsilon_2})$ has the noise amount within $\beta$.
Let $1-\omega =1-e^{-\frac{\epsilon_2 \beta}{2}}$, then we get $\beta = \frac{2}{\epsilon_2}\ln{(\frac{1}{\omega})}$.
For constant $\omega$, $\beta = O(\frac{1}{\epsilon_2})$ will suffice.
The proofs of $\eta_1$, $\eta_2$, $\gamma$ and subclaims (1) and (2) are the same as \textsc{Theorem} 4,
and $\gamma = O(\frac{ln{(4(|L|+|Z|))}}{\epsilon_1})$ for constant $\omega$.
\end{proof}

\textbf{Difference between \privthr and \privqt:}
By \textsc{Theorem} 4, in \privqt $k_{min}'=k+(1-p)(\eta_1-\theta)$ and $k_{max}'=k+(1-p)(\eta_2-\theta)$.
By \textsc{Theorem} 5, in \privthr $k_{min}'= k+(1-p)(\eta_1-\theta-\frac{|Z|}{2}-\beta)$ and $k_{max}'= k+(1-p)(\eta_2-\theta-\frac{|Z|}{2}+\beta)$.
$\eta_1 = O(\frac{|Z|}{2} - \sqrt{\frac{|Z|}{2}})$, $\eta_2 = O(\frac{|Z|}{2} + \sqrt{\frac{|Z|}{2}})$.
As we can see, by removing $\frac{|Z|}{2} \pm \beta$ positive values from $L'$, 
\privthr provides better utility guarantee than \privqt since the difference between $k'$ and $k$ becomes $O(\sqrt{\frac{|Z|}{2}}) \pm (\beta + \theta)$, where $\theta$ is a small constant and $\beta$ is small when sufficient budget $\epsilon_2$ is provided.

\subsubsection{Utility Analysis for \privthrem.}
\label{subsubsec:privthrem}

\begin{theorem}
In \privthrem with Haar transform, given $0<\omega < 1$, 
let $\eta_1 = |L|-k-1+\frac{2}{\epsilon_2}\ln(\frac{|W_{max}-W_k|}{|W_{k}|\omega})$,
$\eta_2 = k-2+\frac{2}{\epsilon_2}\ln(\frac{|W_{k}|}{|W_{max}-W_k|\omega})$, 
and $\gamma = \frac{8}{\epsilon_1}\ln{(\frac{4(|L|+|Z|)}{\omega})}$, 
then with probability at least $(1-\omega)^2$,
(1) all values in $L$ greater than $W_{k_{min}'} + \gamma$ are output, where $k_{min}'=k-\eta_1$, and
(2) no values in $L$ less than $W_{k_{max}'} - \gamma$ are output, where $k_{max}'=k+\eta_2$.
\end{theorem}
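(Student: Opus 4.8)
The plan is to decompose the randomness of \privthrem into two independent sources — the Laplacian noise $Lap(\frac{1}{\epsilon_1})$ injected during private quantization, which controls the per-value noise bound $\gamma$, and the Exponential mechanism that selects the threshold rank $k'$ from $W$, which controls the rank deviations $\eta_1$ and $\eta_2$ — and to bound each source separately. The $\gamma$ part is identical to the argument used for \privqt: a union bound over the $4(|L|+|Z|)$ i.i.d. Laplace variables feeding the Haar coefficients shows that, with probability at least $1-\omega$, every wavelet transformed value in $L\cup Z$ carries noise within $[-\frac{\gamma}{2},\frac{\gamma}{2}]$ for $\gamma=\frac{8}{\epsilon_1}\ln(\frac{4(|L|+|Z|)}{\omega})$. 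I would reuse that argument verbatim, so the genuinely new work lies entirely in analyzing the Exponential mechanism.

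For the Exponential-mechanism step I would start from the sampling law established earlier, namely that the chosen rank equals $i$ with probability proportional to $|Pt_i|\exp(-\frac{\epsilon_2}{2}|i-k|)$, where $\epsilon_2=(1-\alpha)\epsilon$. The key structural facts are that the partitions above rank $k$ (values exceeding $W_k$) have total length $\sum_{i<k}|Pt_i|=|W_{max}-W_k|$, while the partitions at or below rank $k$ have total length $\sum_{i\geq k}|Pt_i|=|W_k|$. To obtain $\eta_2$ (subclaim (2), false positives) I would upper-bound $Pr[k'>k+\eta_2]$ by factoring the largest surviving weight $e^{-\frac{\epsilon_2}{2}(\eta_2+1)}$ out of the numerator and using $\sum_{i>k}|Pt_i|\leq|W_k|$, then lower-bound the normalizing denominator by keeping only the ranks below $k$, whose smallest weight is $e^{-\frac{\epsilon_2}{2}(k-1)}$, for a contribution of at least $e^{-\frac{\epsilon_2}{2}(k-1)}|W_{max}-W_k|$; equating the ratio to $\omega$ and solving gives $\eta_2=k-2+\frac{2}{\epsilon_2}\ln(\frac{|W_k|}{|W_{max}-W_k|\omega})$. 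The bound $\eta_1$ (subclaim (1), false negatives) is symmetric: I would bound $Pr[k'<k-\eta_1]$ by pulling $e^{-\frac{\epsilon_2}{2}(\eta_1+1)}$ out of the numerator over ranks below $k$ (total length $|W_{max}-W_k|$) and lower-bounding the denominator by the ranks above $k$, whose smallest weight is now $e^{-\frac{\epsilon_2}{2}(|L|-k)}$ times total length $|W_k|$, producing $\eta_1=|L|-k-1+\frac{2}{\epsilon_2}\ln(\frac{|W_{max}-W_k|}{|W_k|\omega})$.

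With both pieces in hand I would assemble the subclaims. Subclaim (1) requires simultaneously the event $k'\geq k_{min}'=k-\eta_1$ (probability at least $1-\omega$ from the Exponential-mechanism tail) and the event that every value's quantization noise stays within $[-\frac{\gamma}{2},\frac{\gamma}{2}]$ (probability at least $1-\omega$ from the $\gamma$ argument); since the two randomizations are independent, subclaim (1) holds with probability at least $(1-\omega)^2$, and subclaim (2) follows identically from $k'\leq k_{max}'=k+\eta_2$ together with the same noise event.

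The step I expect to be the main obstacle is the lower bound on the Exponential mechanism's normalizing constant. Because that denominator is a sum of exponentially weighted partition lengths whose mass may sit arbitrarily far from rank $k$, it cannot be bounded below by the total lengths $|W_k|$ or $|W_{max}-W_k|$ directly; the crux is to discard all but the most distant block of ranks and pay for it with the worst-case exponential factor ($e^{-\frac{\epsilon_2}{2}(k-1)}$ for $\eta_2$ and $e^{-\frac{\epsilon_2}{2}(|L|-k)}$ for $\eta_1$). This single looseness is precisely what injects the linear constants $k-2$ and $|L|-k-1$ into the two bounds, so making those terms emerge correctly — rather than a cleaner, purely logarithmic guarantee — is the delicate part of the argument.
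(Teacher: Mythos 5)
Your proposal follows essentially the same route as the paper's proof: the $\gamma$ bound is inherited verbatim from the \privqt{} analysis via a union bound over the $4(|L|+|Z|)$ Laplace variables, and the bounds on $\eta_1$ and $\eta_2$ come from exactly the tail-mass-over-normalizer estimate for the Exponential mechanism that the paper writes down (upper-bounding the bad ranks' total weight by $|W_{max}-W_k|\,e^{-\frac{\epsilon_2}{2}(\eta_1+1)}$ resp.\ $|W_k|\,e^{-\frac{\epsilon_2}{2}(\eta_2+1)}$ and lower-bounding the normalizer by the opposite block at its worst-case distance), before combining the two independent events to get $(1-\omega)^2$. Your write-up actually makes explicit the numerator/denominator reasoning that the paper leaves implicit, but the argument is the same.
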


\begin{proof}
In \privthrem, we allocate $\epsilon_2$ for deriving $k'$ from $k$ by employing Exponential mechanism, a general method proposed in~\cite{exponential}. 
The probability of selecting a rank $i$ is $|Pt_{i}|*exp(-\frac{\epsilon_2}{2}|i-rank(W_k)|)$, where $Pt_i$ is the range $(W_{i-1}, W_{i}]$ decided by the $i-1$th and $i$th wavelet transformed values.

Let $1-\omega$ be the probability of sampling a $k'$ where $k - k' \leq \eta_1$, then
\vspace*{-2ex}
$$\Leftrightarrow \omega < \frac{|W_{max} - W_{k}| * e^{-\frac{\epsilon_2}{2}(\eta_1+1)}}{|W_k|*e^{-\frac{\epsilon_2}{2}(|L|-k)}} $$
\vspace*{-2ex}
$$\Leftrightarrow \eta_1 < |L|-k-1+\frac{2}{\epsilon_2}\ln(\frac{|W_{max}-W_k|}{|W_{k}|\omega})$$
For constant $\omega$, $\eta_1 = O(|L|-k+\frac{1}{\epsilon_2}\ln(\frac{|W_{max}-W_k|}{|W_{k}|})$ will suffice. 

Let $1-\omega$ be the probability of sampling a $k'$ where $k' - k \leq \eta_2$, then
\vspace*{-2ex}
$$\Leftrightarrow \omega < \frac{|W_{k}| * e^{-\frac{\epsilon_2}{2}(\eta_2+1)}}{|W_{max}-W_k|*e^{-\frac{\epsilon_2}{2}(k-1)}} $$
\vspace*{-2ex}
$$\Leftrightarrow \eta_2 < k-2+\frac{2}{\epsilon_2}\ln(\frac{|W_{k}|}{|W_{max}-W_k|\omega})$$
For constant $\omega$, $\eta_2 = O(k+\frac{1}{\epsilon_2}\ln(\frac{|W_{k}|}{|W_{max}-W_k|})$ will suffice.
The proof of $\gamma$ and subclaims (1) and (2) are the same as \textsc{Theorem} 4.
\end{proof}

\textbf{Analysis of \privthr and \privthrem}.
By \textsc{Theorem} 5 and \textsc{Theorem} 6, 
the accuracy for sampling $k'$ in \privthr is dominated by $\frac{1}{\epsilon_2}$
while in \privthrem the accuracy is dominated by $\frac{1}{\epsilon_2}\ln(\frac{|W_{k}|}{|W_{max}-W_k|})$.
Depending on the data distribution, \privthrem may present better or worse utility guarantee than \privthr:  \\
$\ln(\frac{|W_{k}|}{|W_{max}-W_k|})$ is positive when when $\frac{|W_{k}|}{|W_{max}-W_k|} > 1$,
and the accuracy for sampling $k'$ in \privthrem becomes more sensitive to $\epsilon_2$ than \privthr;
$\ln(\frac{|W_{k}|}{|W_{max}-W_k|})$ becomes negative when $\frac{|W_{k}|}{|W_{max}-W_k|}$ is less than 1, 
and the bounds of utility guarantee for \privthrem becomes better than \privthr.

Section~\ref{section-experiments} demonstrates that by reducing the difference between $k'$ and $k$, \privthr and \privthrem achieve more accurate results than \privqt, which conforms to the above analysis.

\begin{figure*}
\centering
\subfloat[\textit{DS1}, \textit{g} = 64, \textit{p} = 58]{\label{DS1}\includegraphics[scale=0.34,clip]{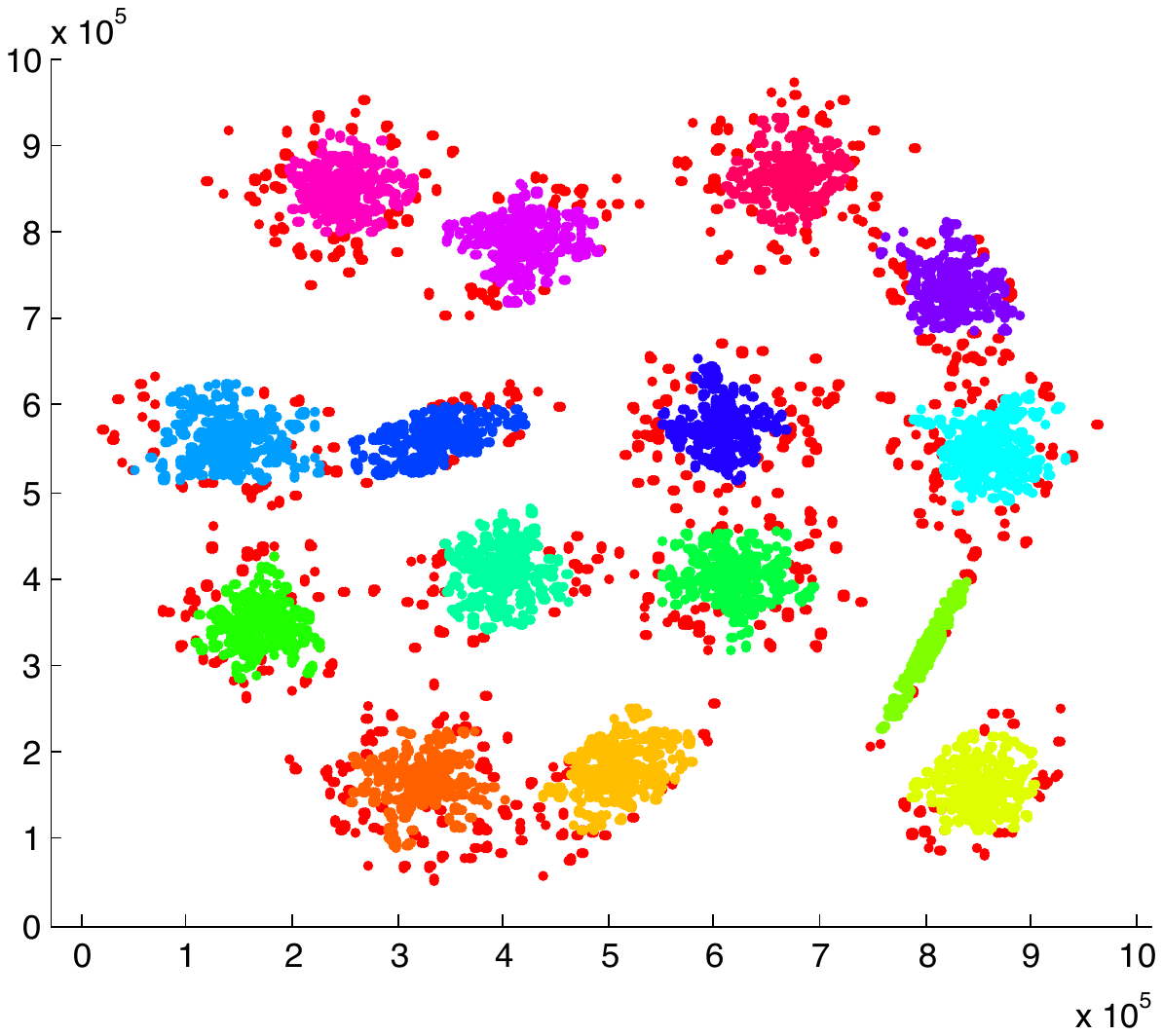} }
\subfloat[\textit{DS2}, \textit{g} = 40, \textit{p} = 10]{\label{DS2}\includegraphics[scale=0.34,clip]{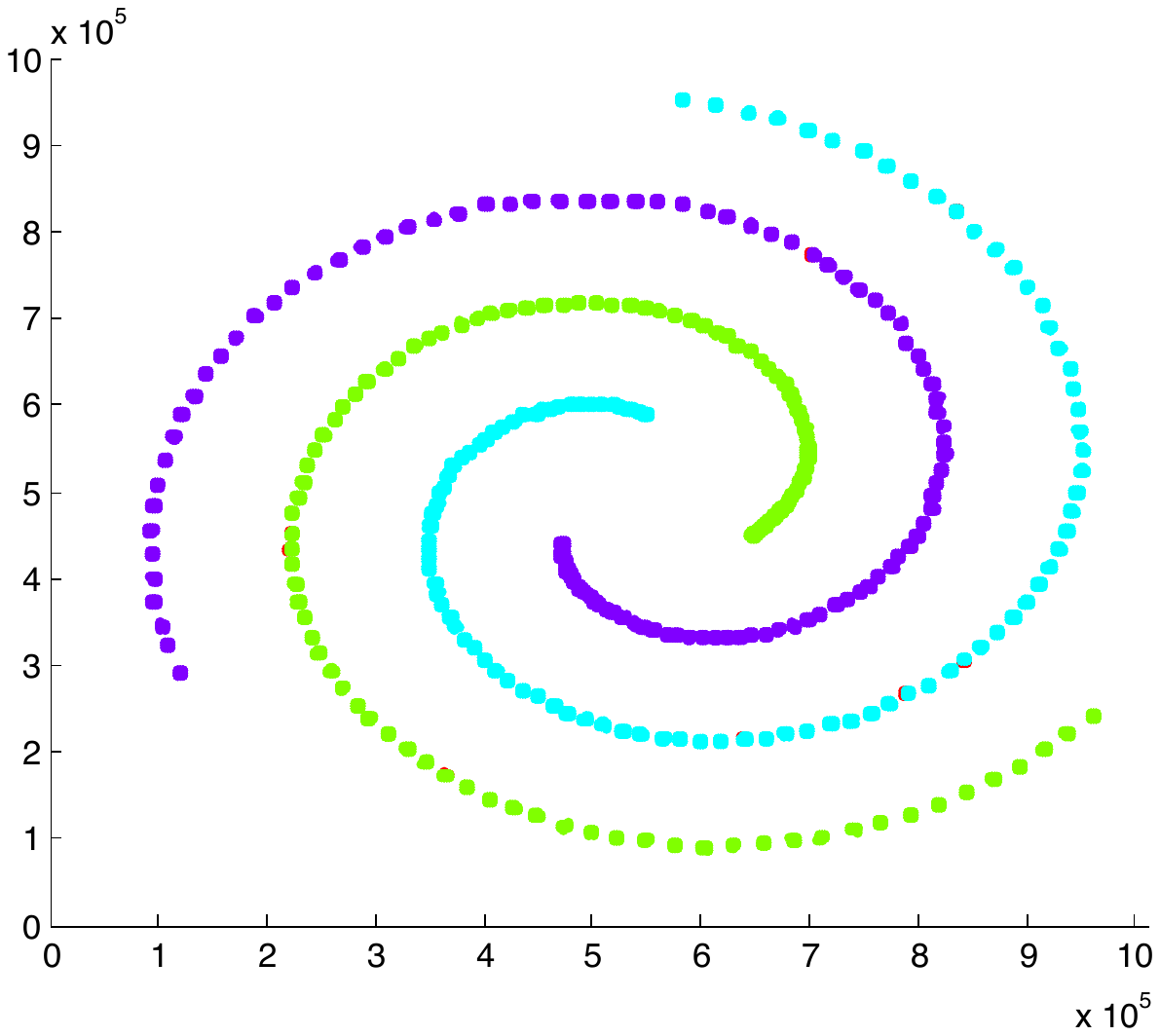} }
\subfloat[\textit{DS3}, \textit{g} = 36, \textit{p} = 23]{\label{DS3}\includegraphics[scale=0.34,clip]{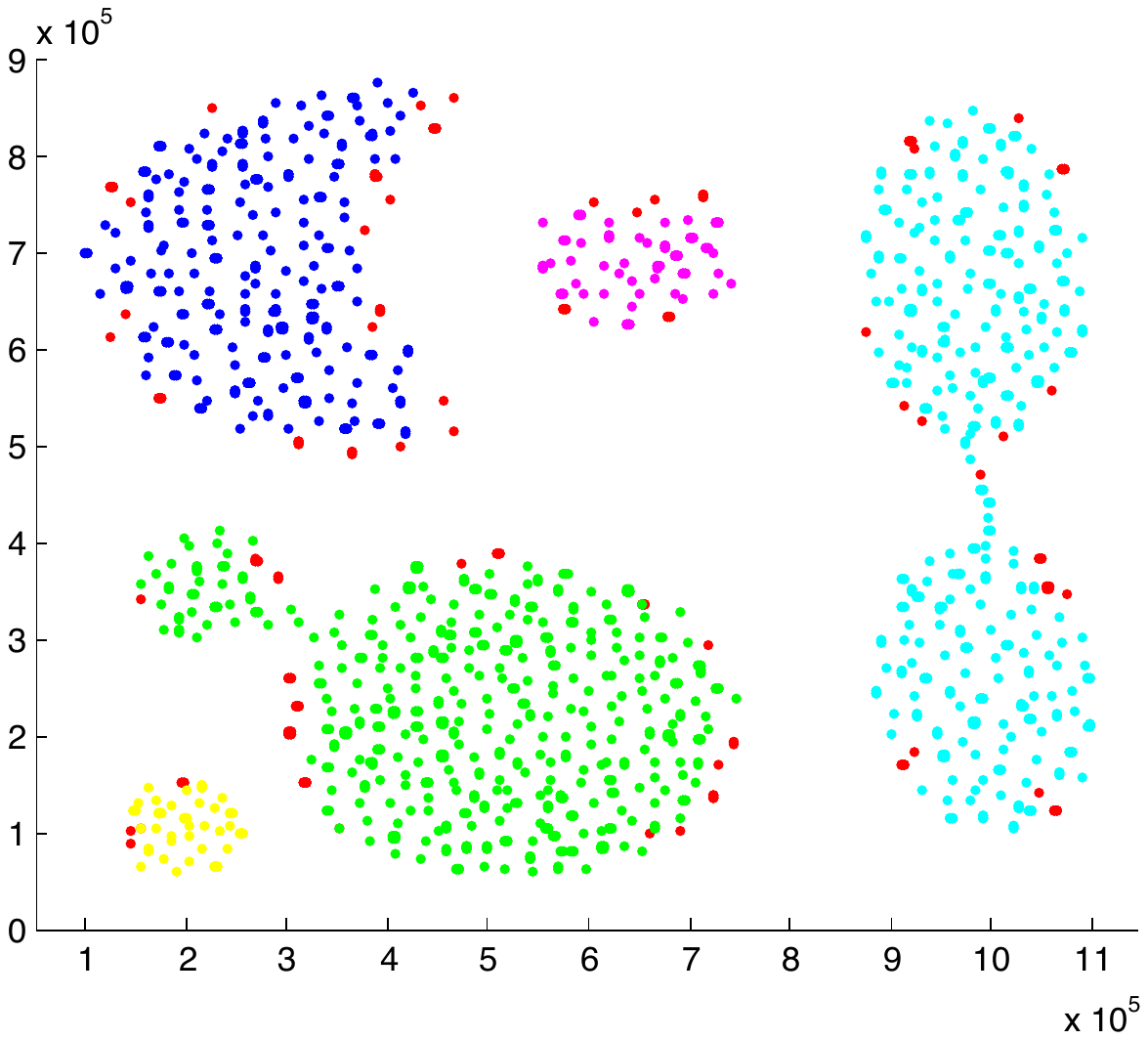} }
\subfloat[\textit{Gowalla}, \textit{g} = 80, \textit{p} = 31]{\label{DS3}\includegraphics[scale=0.34,clip]{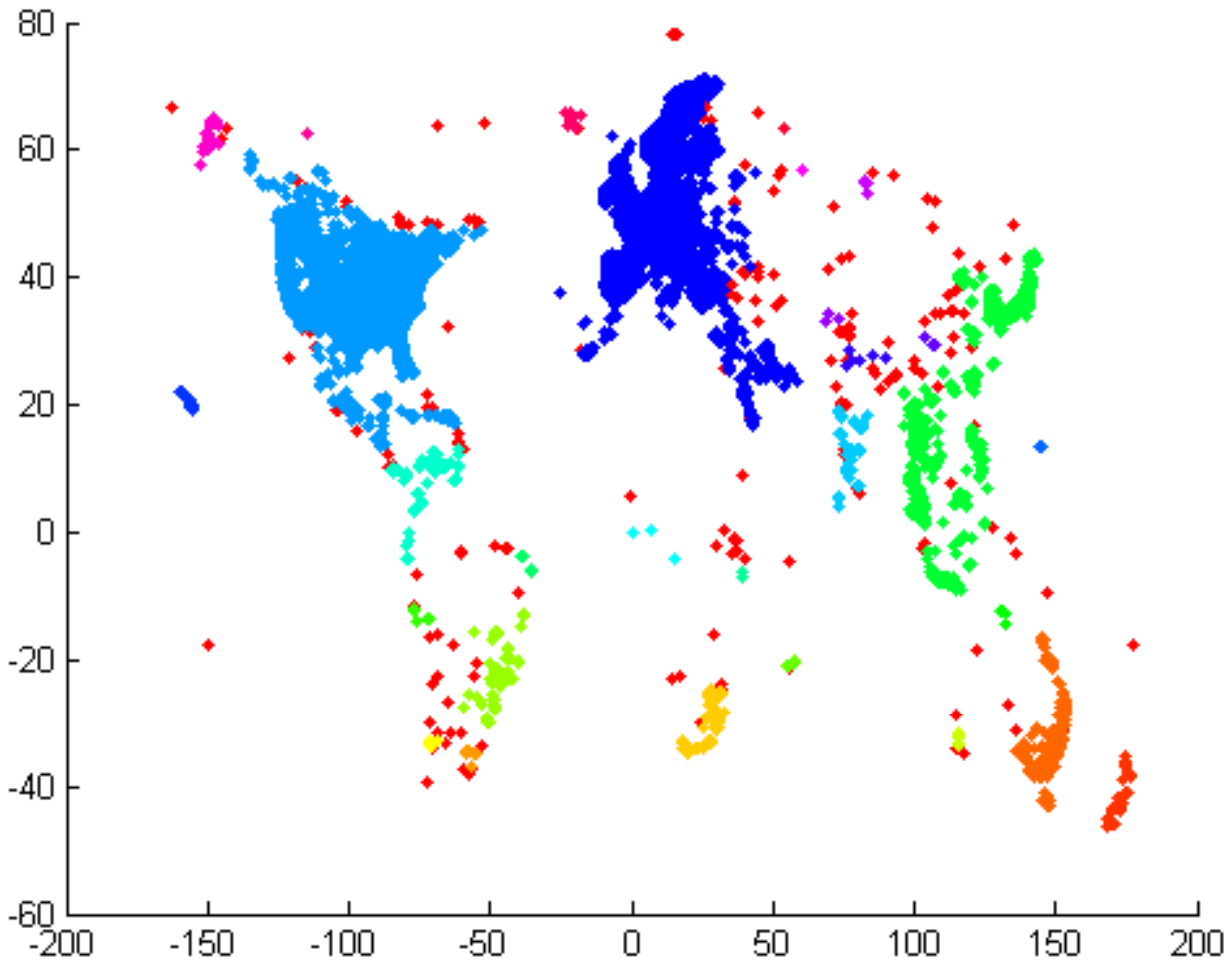} } \\
\vspace*{-1ex}
\caption{\label{fig:3datasets}Illustration of datasets and their WaveCluster results.} 
\end{figure*}

\section{Quantitative Measures}\label{section-measurements}
To quantitatively assess the utility of differentially private WaveCluster, 
we propose two types of measures for measuring the dissimilarity between true and differentially private WaveCluster results. 
The first type, $DSG_C$, measures the dissimilarity of the significant grids and the clusters between true and private results.
The second type focuses on observing the usefulness of differentially private WaveCluster results for further data analysis.
The reason is that a slight difference in the significant grids or clusters may cause a significant difference when using the WaveCluster results.
In this paper, we choose a typical application of further data analysis: building a classifier from the clustering results to predict unlabeled data~\cite{multiscaling}.
The classifier built from true WaveCluster results is called the true classifier $clf_t$ while the classifier built from differentially private WaveCluster results is called the private classifier $clf_p$.
To measure the dissimilarity between $clf_t$ and $clf_p$, we propose two metrics: $OCM$ and $2CE$.

\subsection{Dissimilarity based on Significant Grids and Clusters}

$DSG_C$ considers the dissimilarities of significant grids and clusters.
Assume that there are $t$ clusters of true significant grids and $s$ clusters of differentially private significant grids. 
$t$ might not be equal to $s$, and the cluster labels in $t$ true clusters and $s$ private clusters are completely arbitrary. 
To accommodate these differences, we adopt the Hungarian method~\cite{Kuhn1956}, a combinatorial optimization algorithm, to solve the matching problem between $t$ true clusters and $s$ private clusters while minimizing the matching difference. 

When cluster $C_i$ matches to cluster $C_j$, we define that the distance $d$ between cluster $C_i$ and cluster $C_j$ is $max\{|C_i\backslash C_j|, |C_j\backslash C_i|\}$.
Consider a cluster $C_i = \{g_{1}, g_3, g_5\}$ and a cluster $C_j$ = $\{g_1, g_5, g_7, g_9\}$.
The distance $d$ between clusters $C_i$ and $C_j$ is $max\{|\{g_3\}|, |\{g_7,g_9\}|\}$
$ = 2$.
Given $t$ true clusters, $s$ private clusters, and $t \geq s$ , 
a matching $M_{t,s}$ of $t$ true clusters and $s$ private clusters is a set of cluster pairs, where each private cluster is matched with a true cluster.
We then define the cost of a matching ($M_{cost}$) as the sum of all the distances between each cluster pair in the matching $M_{t,s}$ plus the count of significant grids in the non-matched clusters:

\vspace*{-2ex}
$$M_{cost} = \sum_{1\leq i_x \leq t, 1\leq j_y \leq s} max\{|C_{i_x}\backslash C_{j_y}|, |C_{j_y}\backslash C_{i_x}|\} + \sum_{1\leq z \leq t} |C_z|$$
\vspace*{-2ex}

Here, $i_x$ and $j_y$ indicate the subscripts of clusters in a matched pair.
$|C_z|$ represents the count of significant grids in the non-matched true clusters.
Among all the possible matchings of clusters, we use the Hungarian method to find the optimal matching with the minimum $M_{cost}$,
and computed $DSG_C$ as: 

\vspace*{-3ex}
$$DSG_C = \frac{M_{cost}}{|T|}$$
\vspace*{-2ex}

Here $T$ denotes the set of significant grids in the true WaveCluster results.

\subsection{Dissimilarity based on Classifier Prediction}
\label{subsec:OCM2CE}
$OCM$ and $2CE$ measure the dissimilarity between $clf_t$ and $clf_p$.
We name this way of evaluation as ``clustering-first-then-classification'':
given a set of unlabeled data points, we use a portion of the data points (e.g., 90\%) to compute WaveCluster results, where each cluster is a set of significant grids.
Using the significant grids with cluster labels as training data, we build classifiers $clf_t$ and $clf_p$,
and use them to predict the classes for the remaining data points (e.g., 10\%).

\textbf{Dissimilarity of Classifiers based on Optimal Class Matching ($OCM$).}
$OCM$ measures the dissimilarity between the two sets of classes predicted by $clf_t$ and $clf_p$ for the same test samples.
We use $L_{t}$ to denote the set of classes predicted by $clf_t$ and $L_{p}$ to denote the set of classes predicted by $clf_p$.
Since $L_{t}$ and $L_{p}$ are completely arbitrary, we exploit the Hungarian method to find the optimal matching between $L_{t}$ and $L_{p}$.

Assume that a class $L_{t,i}$ predicted by $clf_t$ is matched to a class  $L_{p, j}$ predicted by $clf_p$, forming a class pair.
We compute the count of common test samples in the class $L_{t,i}$ and the class $L_{p,j}$, and sum the common test samples in each class pair to compute $CT$: 
$$CT =  \sum_{1\leq i \leq c_1, 1\leq j \leq c_2} |L_{t,i}\cap L_{p,j}|  $$

Here $c_1$ is the count of classes in $L_t$ and $c_2$ is the count of classes in $L_p$, and we assume $c_1 \geq c_2$. 
Since there are many possible mappings from the classes in $L_t$ to the classes in $L_p$, we use the Hungarian method to find the optimal mapping that maximizes $CT$.
Based on $CT$ and the total count of the test samples $TT$, we  derive the dissimilarity $OCM$:
\vspace*{-2ex}
$$OCM = 1-\frac{CT}{TT}$$
\vspace*{-2ex}

When the dissimilarity is smaller, the differentially private WaveCluster results are more similar to the true WaveCluster results and maintain high utility for classification use.

\textbf{Dissimilarity of Classifiers based on 2-Combination Enumeration ($2CE$).}
$2CE$ measures the dissimilarity between $clf_t$ and $clf_p$ based on relationships of every pair of test samples, i.e., whether two samples are in the same class.
Essentially, given a pair of test samples $A$ and $B$, we say $A$ and $B$ are classified consistently either (1) $clf_t(A)=clf_t(B)$ and $clf_p(A)=clf_p(B)$ or (2) $clf_t(A) \neq clf_t(B)$ and $clf_p(A) \neq clf_p(B)$. 
$2CE$ is the ratio of the count of test sample pairs that are not classified consistently over the total number of test sample pairs, which is the set of 2-combination of the test samples.
$2CE$ uses pairs of test samples to eliminate the need of finding the optimal matching between the classes predicted by $clf_t$ and $clf_p$.

\begin{figure*}
\vspace*{1ex}
\centering
\subfloat[\textit{k' and k on DS1}]{\includegraphics[scale=0.48,clip]{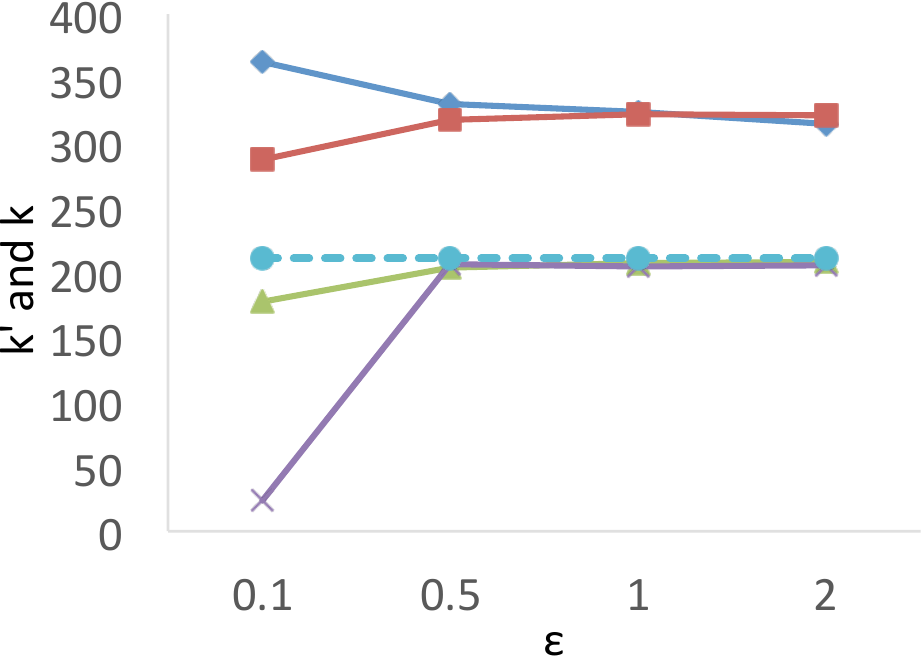} }
\subfloat[\textit{k' and k on DS2}]{\includegraphics[scale=0.48,clip]{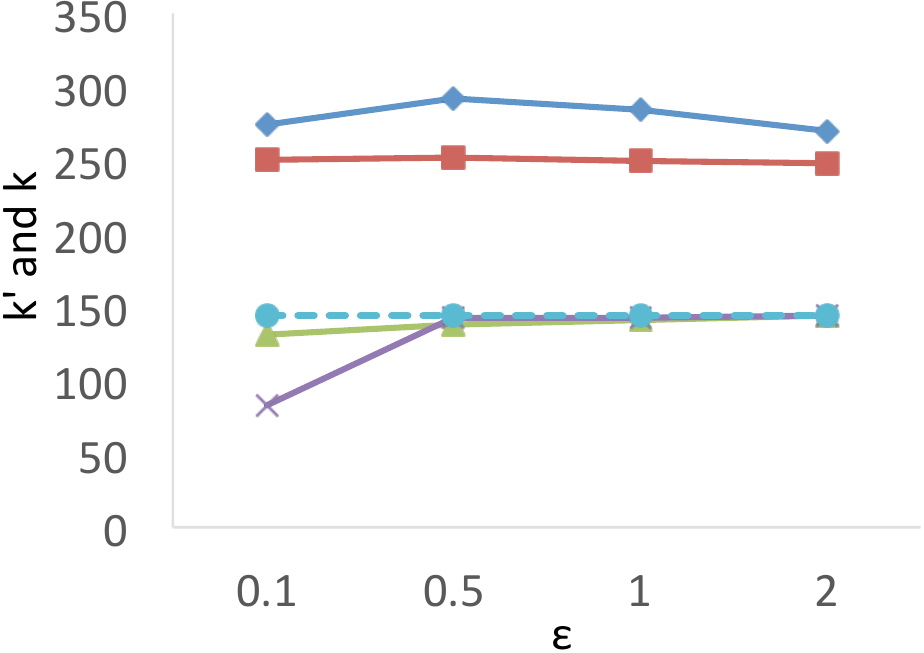} }
\subfloat[\textit{k' and k on DS3}]{\includegraphics[scale=0.48,clip]{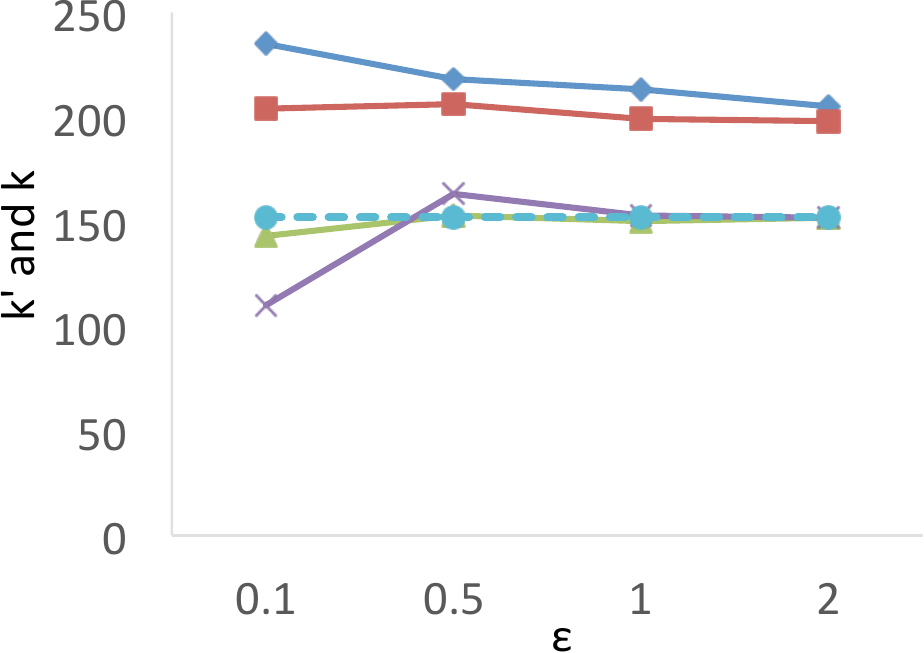} }
\subfloat[\textit{k' and k on Gowalla}]{\includegraphics[scale=0.48,clip]{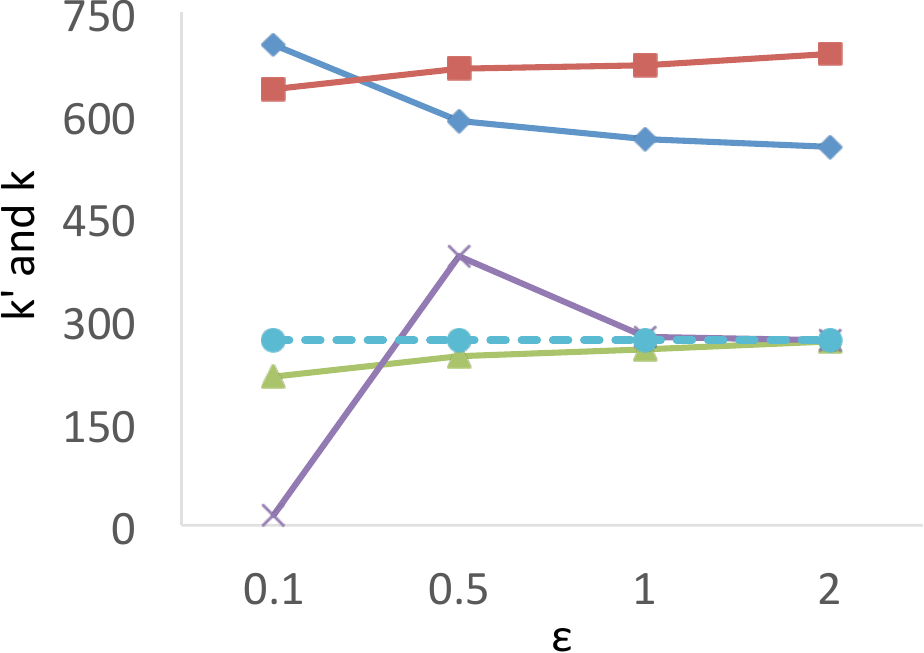} }\\
\subfloat{\includegraphics[scale=0.8,clip]{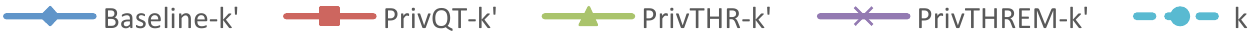} } \\
\vspace*{-1ex}
\caption{\label{fig:comparisonofk}Comparing private $k'$ of 4 techniques with true $k$ on $DS1$, $DS2$, $DS3$ and $Gowalla$ with increasing $\epsilon$.} 
\vspace*{-4ex}
\end{figure*}

\section{Experiments}\label{section-experiments}
We evaluate the proposed techniques using three datasets that are widely used in previous clustering algorithms~\cite{datasetlink}, and one large scale dataset derived from the check-in information in Gowalla\footnote{https://snap.stanford.edu/data/loc-gowalla.html.} geo-social networking website~\cite{Gowalla}, which was used to evaluate grid-based clustering algorithms in~\cite{dbscansocial}.

\subsection{Experiment Setup}
In our experiments, we compare the performances of the four techniques, \baseline, \privqt, \privthr, and \privthrem, on the four datasets using two types of measures proposed in Section~\ref{section-measurements} and provide analysis on the results. 
We use Haar transform as the wavelet transform and set the wavelet decomposition level to 1 for the four techniques.
\baseline uses the adaptive-grid method~\cite{geospatial} for synthetic data generation.
The classification algorithm used for measuring $OCM$ and $2CE$ is C4.5 decision tree algorithm~\cite{c4.5}.
We conduct experiments with privacy budgets ranging from 0.1 to 2.0;
for each budget and each metric, we apply the techniques on each dataset for 10 times and compute their average performances.
All experiments were conducted on a machine with Intel 2.67GHz CPU and 8GB RAM.

\textbf{Datasets.}
The four clustering datasets contain different data shapes that are specially interesting for clustering.
Figures~\ref{fig:3datasets} shows the WaveCluster results on four datasets under certain parameter settings of grid size $g$ and density threshold $p$.
Any two adjacent clusters are marked with different colors. 
The points in red color are identified as noise, which fall into the non-significant grids.

$DS1$ is a dataset containing 15 Gaussian clusters with different degrees of cluster overlapping.
It contains 30000 data points.
These 15 clusters are all in convex shapes. 
The center area of each cluster has higher density and is resistant to noise. 
However, the overlapped area of two adjacent clusters has lower density and is prone to be affected by noise, which might turn the corresponding non-significant grids into significant grids and further connect two separate clusters.
$DS2$ is a dataset with 3 spiral clusters.
It contains 31200 data points.
The head of each spiral is quite close to one another. 
Some noisy significant grids are very likely to bridge the gap between adjacent spirals and merge them into one cluster.
$DS3$ is a data dataset with 5 various shapes of clusters, including concave shapes.
It contains 31520 data points.
There are two clusters that both contain two sub components and a narrow line-shape area that bridges those two sub components. 
The narrow bridging area has low density and might be turned into non-significant grids, causing a cluster to split into two clusters.
$Gowalla$ is the check-in dataset resembling the world map, which records time and location information of users' check-ins. 
We use only the location information for evaluation.
There are about 6.4M records in total.
The large size of the dataset makes it infeasible to run experiments with C4.5 and \baseline due to memory constraints.
Thus, similar to~\cite{geospatial}, we sampled 1M records from the dataset for evaluation.

We next present the results on comparing $k'$ and $k$,
and then present the results of the two types of measures.

\begin{figure*}
\vspace*{-2ex}
\centering
\subfloat[\textit{DSG$_C$-DS1}]{\includegraphics[scale=0.48,clip]{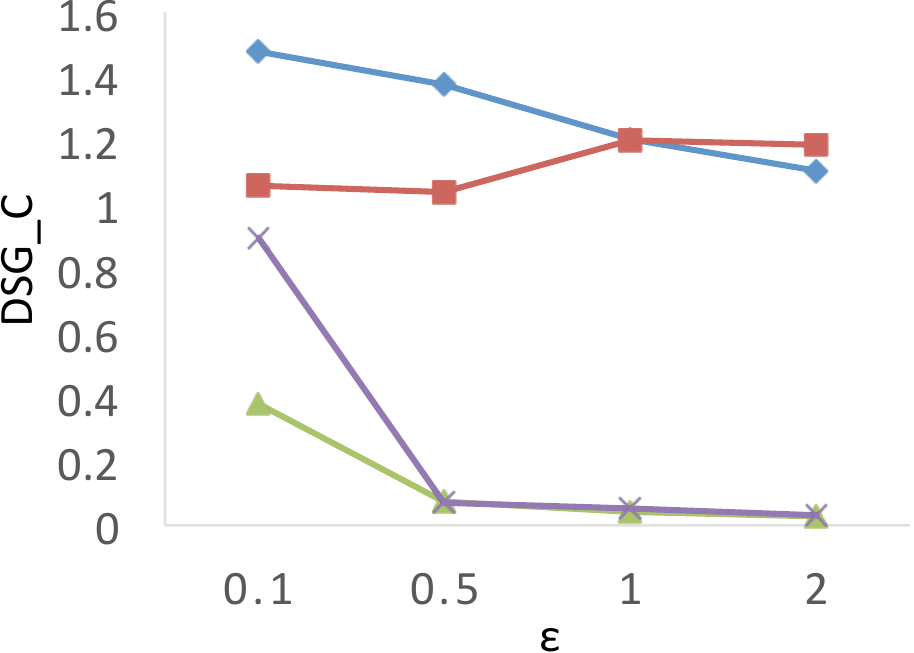} }
\subfloat[\textit{DSG$_C$-DS2}]{\includegraphics[scale=0.48,clip]{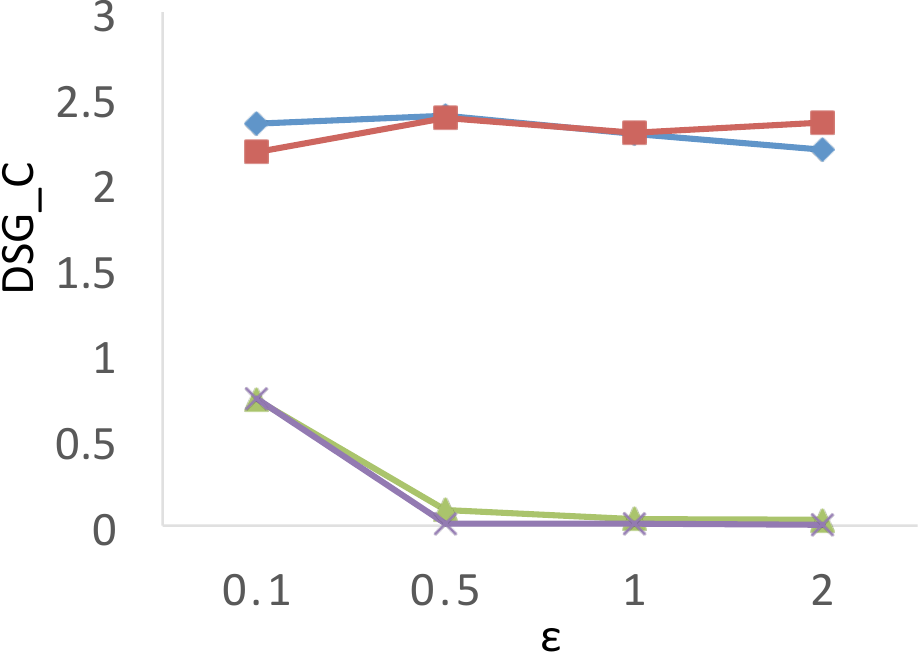} }
\subfloat[\textit{DSG$_C$-DS3}]{\includegraphics[scale=0.48,clip]{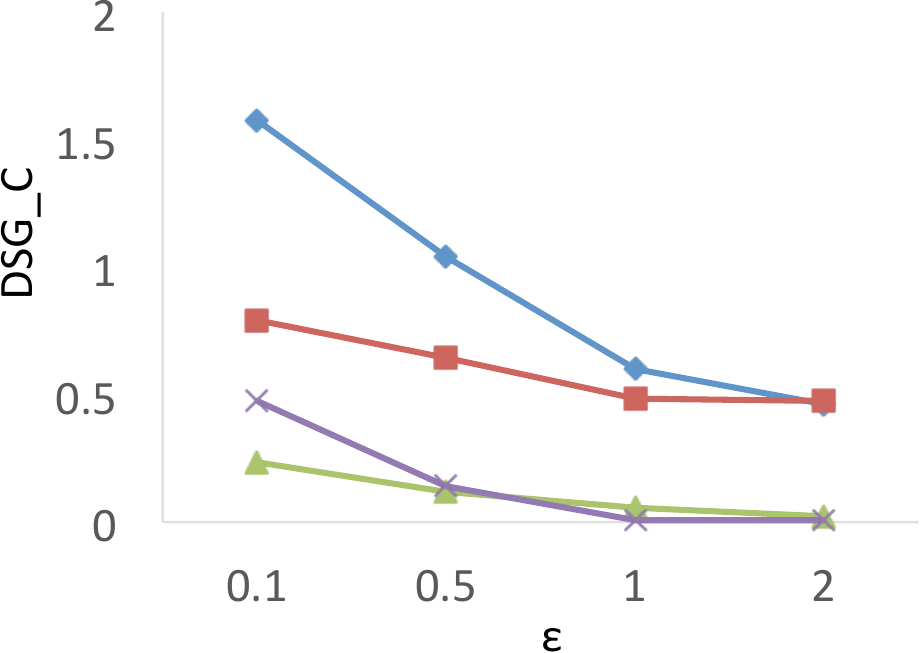} }
\subfloat[\textit{DSG$_C$-Gowalla}]{\includegraphics[scale=0.48,clip]{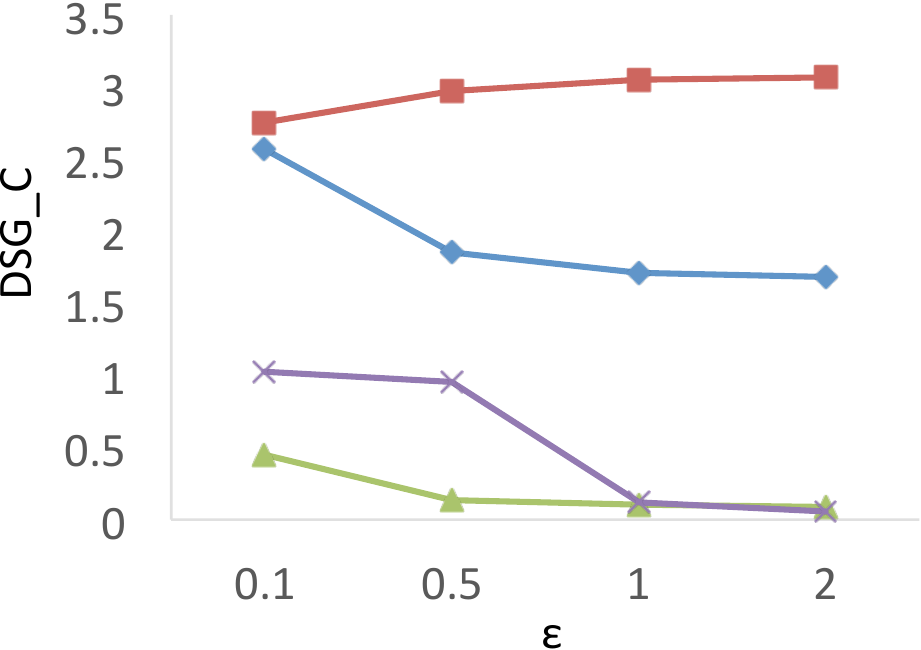} }\\
\subfloat{\includegraphics[scale=0.8,clip]{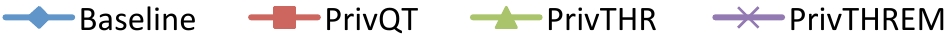} } \\
\vspace*{-1ex}
\caption{\label{fig:Dissimilarity}Comparing $DSG_C$ of 4 techniques on $DS1$, $DS2$, $DS3$ and $Gowalla$ with increasing $\epsilon$.} 
\vspace*{-4ex}
\end{figure*}

\subsection{Comparing Private $k'$ With True $k$}
We first measure the differences between the true $k$ and private $k'$s on each dataset with $\epsilon$ ranging from 0.1 to 2.0,
and the results are shown in Figure~\ref{fig:comparisonofk}.
The results show that for all datasets, when $\epsilon \geq 0.5$, 
the relative errors of $k'$, i.e., $\frac{|k'-k|}{k}$, in \privqt and \privthrem are less than 4.7\% on average,
while the relative errors of $k'$ in \baseline and \privqt range from 32.2\% to 150.5\%.
For example, in $DS2$, the true $k$ is 144.
When $\epsilon$ is 1, 
the average private $k'$ is 141.0 ($2.1\%$) for \privthr and 142.8 ($0.8\%$) for \privthrem,
while \baseline and \privqt obtain 284.0 ($97.2\%$) and 249.2 ($73.1\%$) for the average $k'$ respectively.
Note that $|Z|$ is 241 in $DS2$, 
and the difference between the average $k'$ and $k$ is 105.2 for \privqt,
which is quite close to the theoretical bound $(1-p)\frac{|Z|}{2} = 108.45$
derived from our utility analysis in Section 5.2.1.
When $\epsilon$ is 0.1, the $k'$ in \privthrem deviates from $k$ more significantly than the $k'$ in \privthr,
indicating that \privthrem is more sensitive to $\epsilon$ than \privthr as discussed in Section~\ref{subsubsec:privthrem}.
For example, in $DS2$, the average $k'$ in \privthrem is 82.8 ($42.5\%$) while the average $k'$ in \privthr is 131.2 ($8.9\%$).

\subsection{Results of $DSG_C$}

Figure~\ref{fig:Dissimilarity} shows the results of $DSG_C$ for the four techniques when the privacy budget ranges from 0.1 to 2.0.  
X-axis shows the privacy budgets, and Y-axis denotes the values of $DSG_C$.
\eat{
We found that in quite a lot of cases, $DSG_C$ is larger than $DSG$ when the setting is exactly the same. 
The reason is that $DSG_C$ might double-count the dissimilar significant grids between $T$ and $P$.
For example, suppose there are two clusters in $T$, cluster $C_1$ includes significant grids $\{g_1, g_2\}$ and cluster $C_2$ contains grid $\{g_3\}$. 
In $P$, cluster $C_1$ includes significant grid $\{g_1\}$ while cluster $C_2$ contains grids $\{g_2, g_3\}$. 
According to the definition of $DSG_C$, we found $g_2$ is counted twice in $DSG_C$. 
This double-counting causes higher $DSG_C$ value when the setting is the same. }
As shown in the results, both \privthr and \privthrem achieve smaller $DSG_C$ values than \baseline and \privqt on all four datasets for all budgets.
The reason is that though the noisy significant grids generated by \baseline and \privqt may be similar to the true significant grids,
these noisy significant grids result in very different shapes of clusters and thus result in a large value of $DSG_C$,
while \privthr and \privthrem preserves more accurate cluster shapes.
\eat{
The reason is that $DSG_C$ considers the differences of both significant grids and clusters.
Therefore even if the noisy significant grids are similar to the true significant grids, these noisy significant grids may result in very different shapes of clusters and thus result in a large value of $DSG_C$.}
For example, in $DS3$, the narrow line-shape areas and the gap between two adjacent clusters are sensitive to noise.
If some noisy significant grids appear in these areas, two clusters may be merged into one; if some significant grids disappear due to noise, one cluster might be split into two clusters.
Such changes cause $DSG_C$ to increase significantly.

Unlike the other techniques, \privqt benefits little from the increased privacy budgets.
For \privqt, the difference between $k'$ and $k$ in \privqt is dominated by $\frac{|Z|}{2}$.
Increasing privacy budgets can only reduce noise magnitude
and cannot smooth such difference.

\begin{figure*}
\centering
\subfloat[\textit{F-Measure-DS1}]{\includegraphics[scale=0.48,clip]{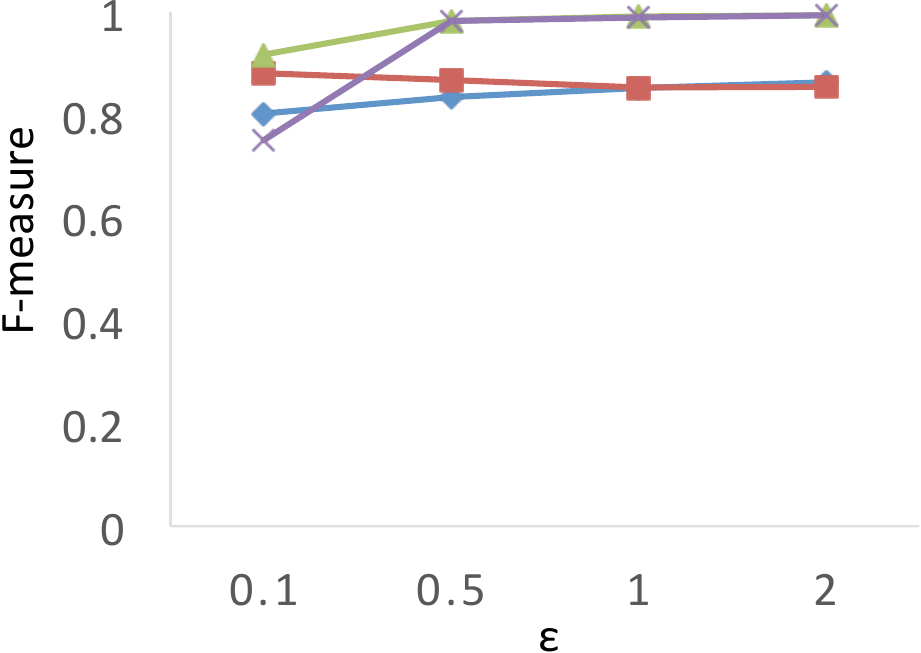} }
\subfloat[\textit{F-Measure-DS2}]{\includegraphics[scale=0.48,clip]{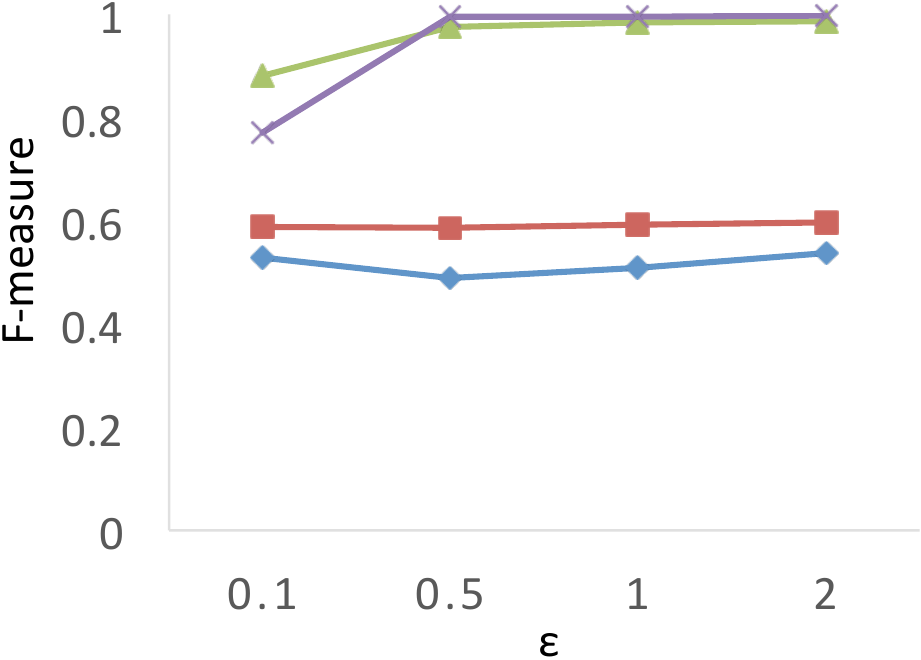} }
\subfloat[\textit{F-Measure-DS3}]{\includegraphics[scale=0.48,clip]{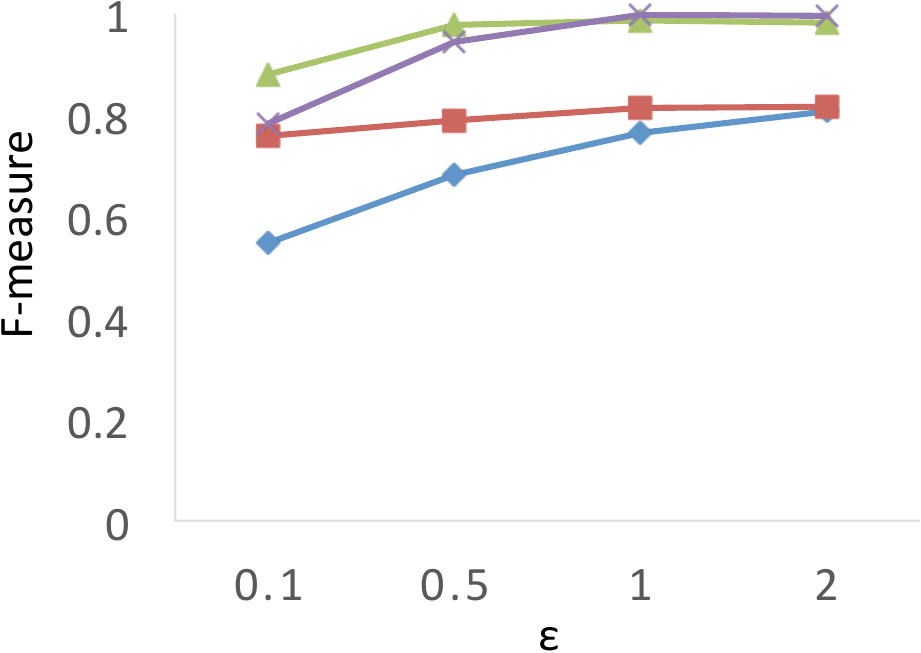} }
\subfloat[\textit{F-Measure-Gowalla}]{\includegraphics[scale=0.48,clip]{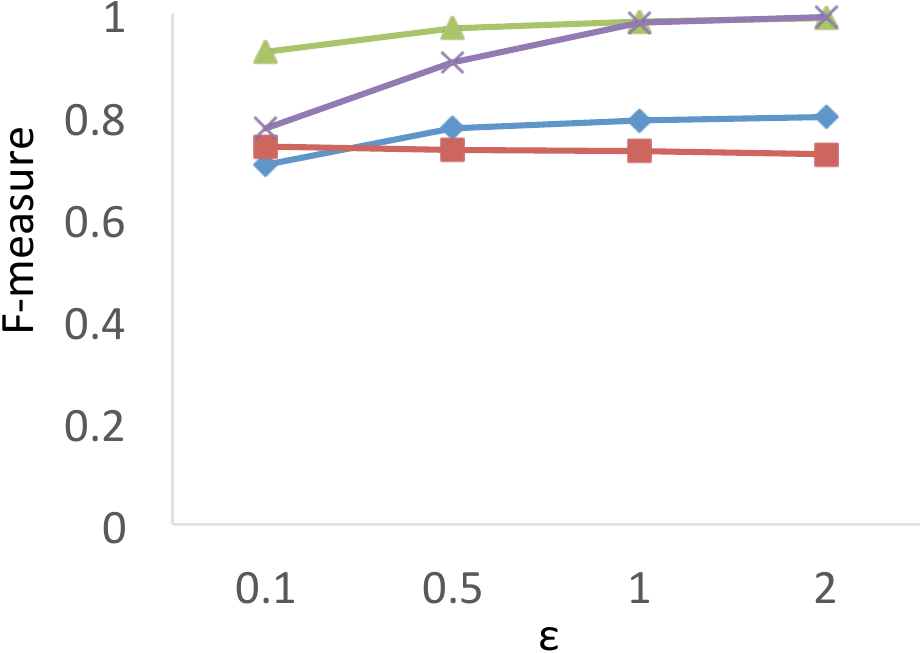} }\\
\subfloat{\includegraphics[scale=0.8,clip]{legend-new.pdf} } \\
\caption{\label{fig:fmeasure}Comparing F-Measure of 4 techniques on $DS1$, $DS2$, $DS3$ and $Gowalla$ with increasing $\epsilon$.}  
\vspace*{-4ex}
\end{figure*}

\textbf{Comparison to F-Measure Results.}
Clustering analysis usually uses F-measure as a representative external validations to measure the similarity between the ground truth (known class labels) and the clustering results~\cite{Fmeasure}. 
In our experiments, we consider the true WaveCluster results as the ground truth,
and the results of F-measure are shown in Figure~\ref{fig:fmeasure}.
The results show that \privqt and \baseline achieve high F-measure scores (more than 0.8) for almost all budgets in $DS1$, even though the private results produced by \privqt and \baseline are quite different from the true results.
For example, when $\epsilon = 0.1$, the private results of \privqt and \baseline have more than 30 clusters while the true results have only 15 clusters.
On the contrary, Figure~\ref{fig:Dissimilarity} (a) shows that $DSG_C$ is able to clearly differentiate the performances of the four techniques.
The reason is that unlike $DSG_C$ that allows only one-to-one mapping between true and private clusters, F-measure allows one-to-many or many-to-one mapping between true and private clusters.
If the size of true clusters is larger than that of private clusters, F-measure allows many to one mapping, and vice versa.
Thus, $DSG_C$ presents more strict evaluation than F-measure in computing similarity/dissimilarity.

\begin{figure*}
\vspace*{-2ex}
\centering
\subfloat[\textit{OCM-DS1}]{\includegraphics[scale=0.48,clip]{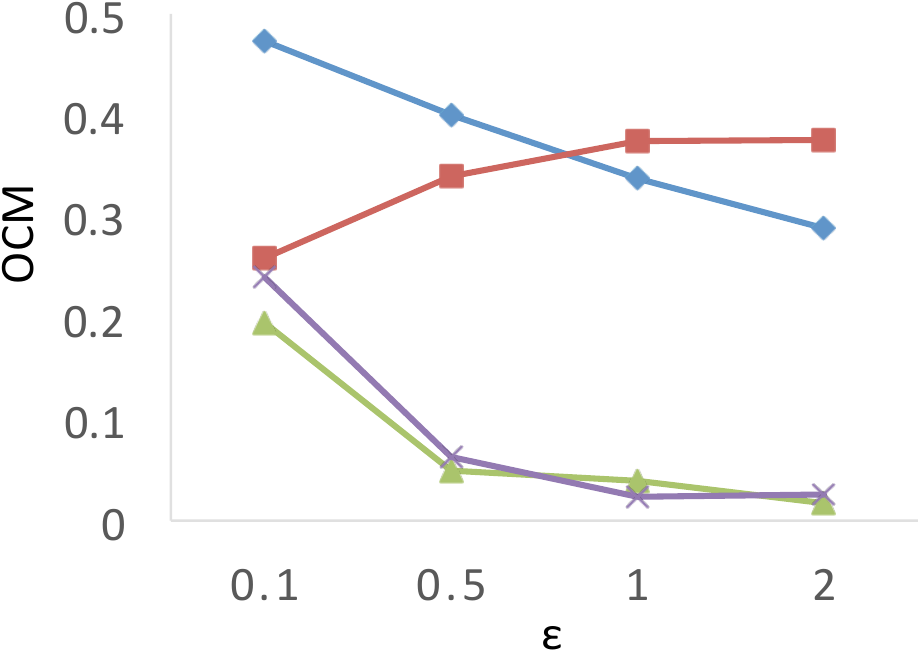} }
\subfloat[\textit{OCM-DS2}]{\includegraphics[scale=0.48,clip]{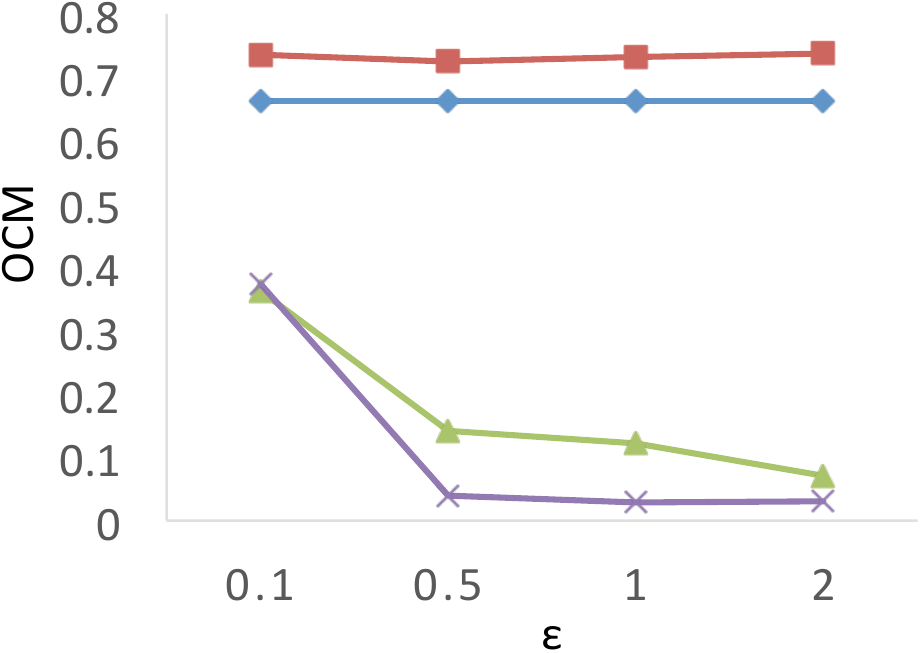} }
\subfloat[\textit{OCM-DS3}]{\includegraphics[scale=0.48,clip]{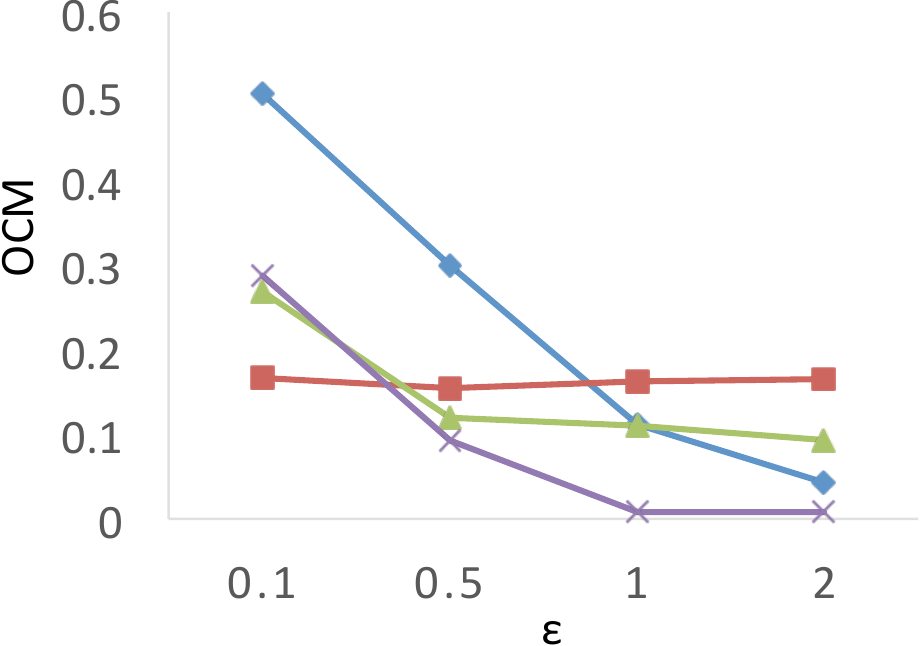} }
\subfloat[\textit{OCM-Gowalla}]{\includegraphics[scale=0.48,clip]{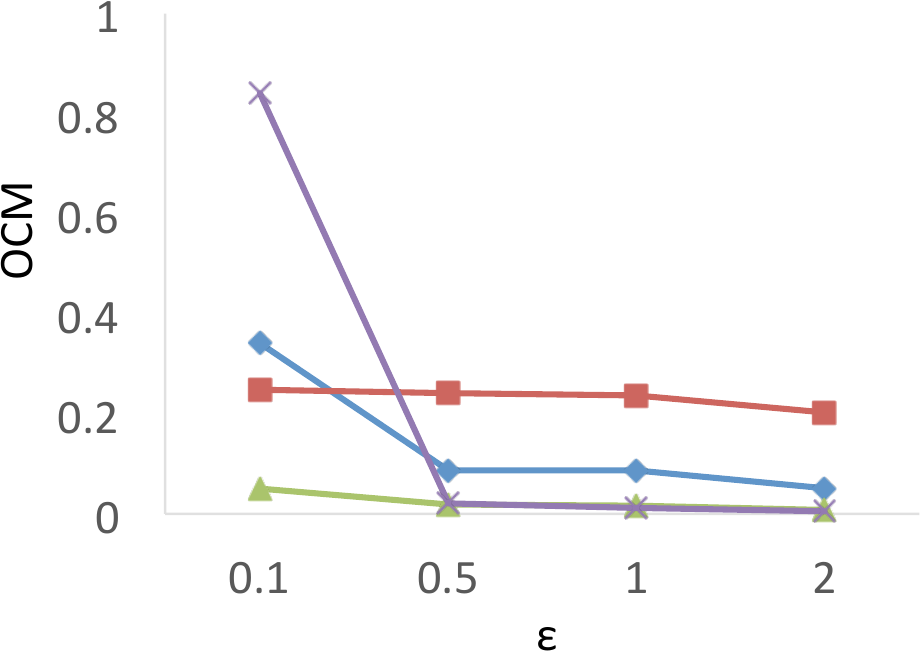} }\\
\subfloat[\textit{2CE-DS1}]{\includegraphics[scale=0.48,clip]{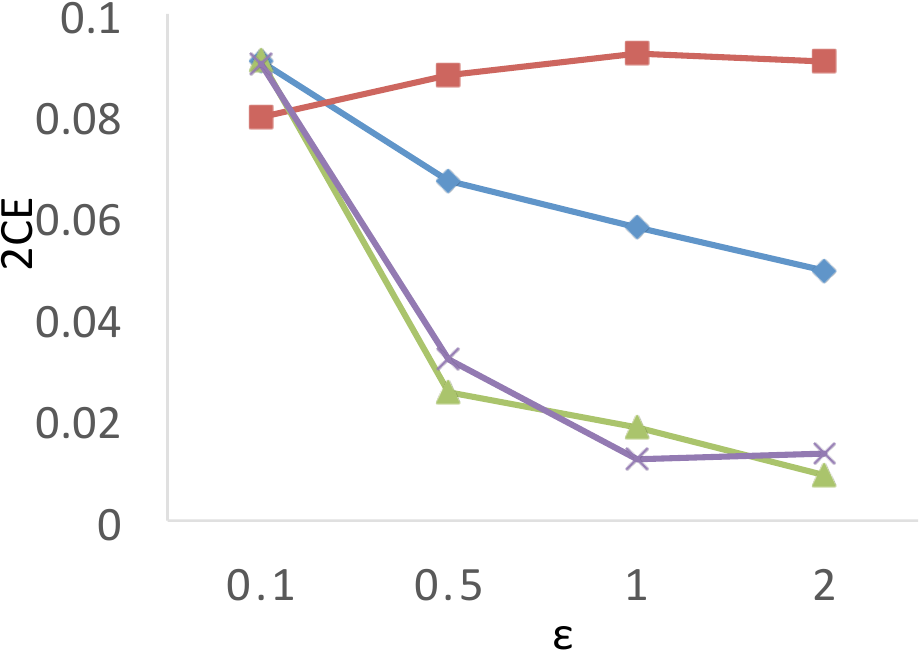} }
\subfloat[\textit{2CE-DS2}]{\includegraphics[scale=0.48,clip]{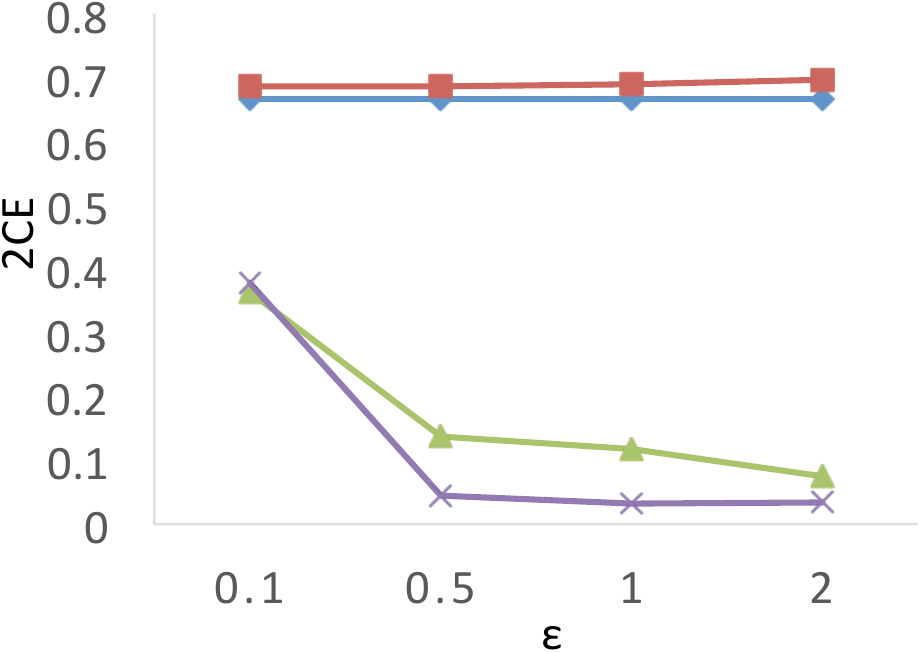} }
\subfloat[\textit{2CE-DS3}]{\includegraphics[scale=0.48,clip]{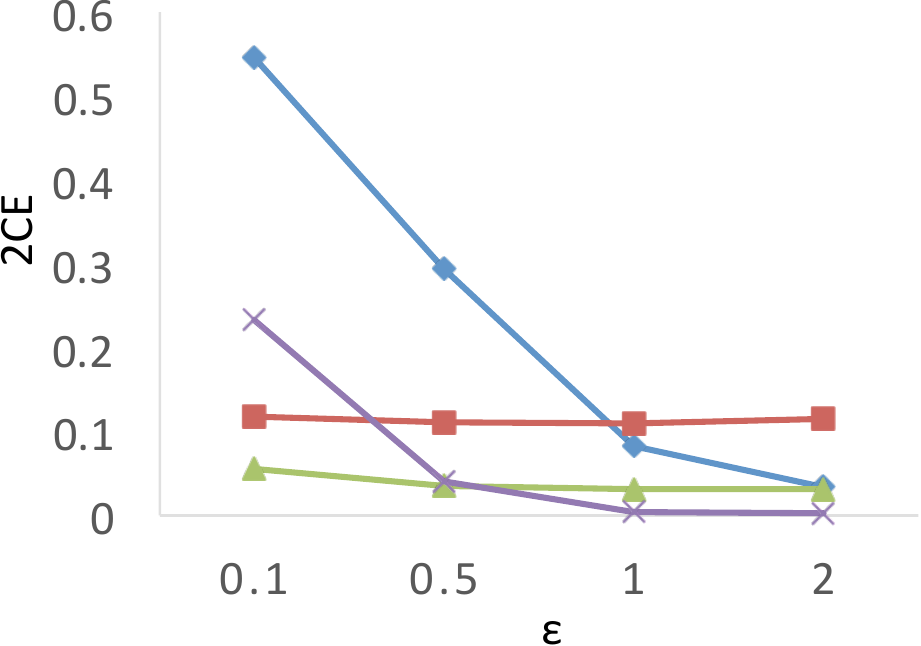} }
\subfloat[\textit{2CE-Gowalla}]{\includegraphics[scale=0.48,clip]{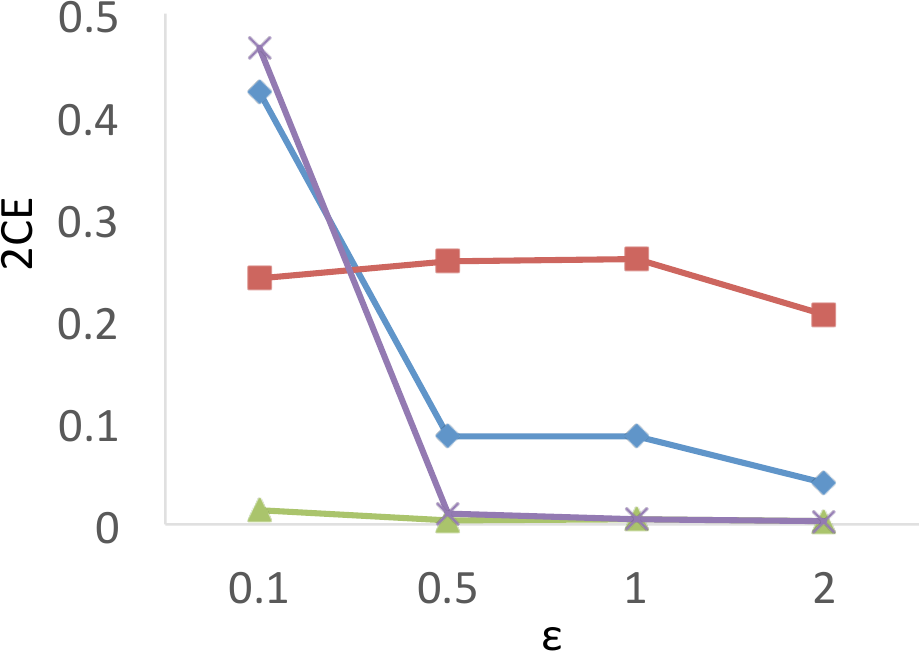} }\\
\subfloat{\includegraphics[scale=0.8,clip]{legend-new.pdf} } \\
\vspace*{-1ex}
\caption{\label{fig:Dissimilarity2}Comparing $OCM$ and $2CE$ of 4 techniques on $DS1$, $DS2$, $DS3$ and $Gowalla$ with increasing $\epsilon$.}  
\vspace*{-1ex}
\end{figure*}

\subsection{Results of $OCM$ and $2CE$}

\textbf{Results of $OCM$.}
Figure~\ref{fig:Dissimilarity2} shows the results of $OCM$ for the four techniques.
X-axis denotes the privacy budgets while Y-axis denotes the values of $OCM$.
As shown in the results, \privthr and \privthrem achieve smaller $OCM$ values than \baseline and \privqt for all datasets when $\epsilon$ ranges from 0.5 to 2.0.
When $\epsilon$ is greater than 0.5, the $OCM$ values of \privthr and \privthrem are less than 0.15 on $DS1$, $DS3$, and $Gowalla$, indicating the private classifier $clf_p$ maintains highly similar prediction results as the true classifier $clf_t$. 
On $DS2$ that contains 3 spirals, \privthrem still maintains a very low $OCM$ value ($<$ 0.1) when $\epsilon$ is greater than 0.5 while \privthr has a slightly worse $OCM$ value (ranging from 0.1 to 0.2).
Such results show that \privthrem is more resilient to noise for concave-shaped data than \privthr.

\textbf{Results of $2CE$.}
Figure~\ref{fig:Dissimilarity2} shows the results of $2CE$ for the four techniques.
X-axis denotes the privacy budgets while Y-axis denotes the values of $2CE$.
As shown in the results, \privthr and \privthrem achieve smaller $2CE$ values than \baseline and \privqt for all datasets when $\epsilon$ ranges from 0.5 to 2.0.

In general, all four techniques exhibit similar trends of $2CE$ as their trends in $OCM$.
On $DS1$, all four techniques have very low $2CE$ values ($<$ 0.1) though their corresponding $OCM$ values are much higher (ranging from 0.05 to 0.5).
The reason is that $2CE$ captures the relationships between data points while $OCM$ focuses on the mappings of classes.
If there are $k$ test samples out of $N$ total samples having different prediction results in the true and private results, $2CE$ expresses the differences as $C(k,2) + k(N-k)$ over the total combinations of test samples $C(N,2)$, while $OCM$ expresses the differences as $k$ over $N$.
On $DS1$, the $k$ test samples are predicted to be in the same cluster in the private results and $C(k,2)$ becomes close to 0.
In this case, only $k(N-k)$ matters in the computation of $2CE$.
Given that $C(N,2)$ is much larger than $N$ and $k(N-k)$ when $N$ of $DS1$ is about 30,000, $2CE$ has a smaller value than $OCM$ for measuring the differences, and thus is less sensitive to the noise on $DS1$.

\textbf{Budget Allocation for \privthr.}
Based on the utility analysis Section 5.2.2, $\epsilon_1$ for private quantization affects the accuracy of $\gamma$, and $\epsilon_2$ for obtaining $|Z|'$ affects the accuracy of $\beta$.
As the constant factor of $\gamma$, $\frac{8}{\epsilon_1}\ln{(\frac{4(|L|+|Z|)}{\omega})}$, is larger than the constant factor of $\beta$, $\frac{2}{\epsilon_2}\ln{(\frac{1}{\omega})}$,
more budget should be allocated for $\epsilon_1$ to achieve better utility.
We evaluate the values of $DSG_C$ of \privthr on $DS1$ under different budget allocation strategies,
ranging from 1\% for $\epsilon_1$ to 99\% for $\epsilon_1$.
Based on the results, the budget allocation strategy with 90\% for $\epsilon_1$  and 10\% for $\epsilon_2$ performs the best.
The results of other measures on $DS1$ show the similar results, 
and the results of all the two types of measures on other datasets also show the similar results.
Detailed results are omitted.

\section{Conclusion}\label{section-conclusion}
In this paper we have addressed the problem of cluster analysis with differential privacy. We take a well-known effective and efficient clusteing algorithm called WaveCluster, and propose several ways to introduce randomness in the computation of WaveCluster. We also devise several new quantitative measures for examining the dissimilarity between the non-private and differentially private results and the usefulness of differentially private results in classification. In the future, we will investigate under differential privacy other categories of clustering algorithms, such as hierarchical clustering. Another important problem is to explore the applicability of differentially private clustering in those cases where the users do not have good knowledge about the dataset, and the parameters of the algorithms should be inferred in a differentially private way.

\vspace*{1.5ex}
\textbf{Acknowledgments.} This work is supported in part by the National Science Foundation under the awards CNS-1314229.

\balance
\bibliographystyle{abbrv}

\begin{thebibliography}{10}

\bibitem{datasetlink}
Clustering datasets.
\newblock http://cs.joensuu.fi/sipu/datasets/.

\bibitem{Fmeasure}
E.~Achtert, S.~Goldhofer, H.-P. Kriegel, E.~Schubert, and A.~Zimek.
\newblock Evaluation of clusterings - metrics and visual support.
\newblock In {\em ICDE}, 2012.

\bibitem{rgatherclustering}
G.~Aggarwal, T.~Feder, K.~Kenthapadi, S.~Khuller, R.~Panigrahy, D.~Thomas, and
  A.~Zhu.
\newblock Achieving anonymity via clustering.
\newblock In {\em PODS}, 2006.

\bibitem{haar}
A.~N. Akansu and R.~A. Haddad.
\newblock {\em Multiresolution Signal Decomposition: Transforms, Subbands, and
  Wavelets}.
\newblock Academic Press, Inc., 1992.

\bibitem{waveletreview}
A.~N. Akansu, W.~A. Serdijn, and I.~W. Selesnick.
\newblock Emerging applications of wavelets: A review.
\newblock {\em Phys. Commun.}, 3(1), 2010.

\bibitem{probabilisticmethod}
N.~Alon and J.~H. Spencer.
\newblock {\em The Probabilistic Method}.
\newblock Wiley, 1992.

\bibitem{Barak}
B.~Barak, K.~Chaudhuri, C.~Dwork, S.~Kale, F.~McSherry, and K.~Talwar.
\newblock Privacy, accuracy, and consistency too: A holistic solution to
  contingency table release.
\newblock 2007.

\bibitem{regression_nips}
K.~Chaudhuri and C.~Monteleoni.
\newblock Privacy-preserving logistic regression.
\newblock In {\em NIPS}, 2008.

\bibitem{Gowalla}
E.~Cho, S.~A. Myers, and J.~Leskovec.
\newblock Friendship and mobility: User movement in location-based social
  networks.
\newblock In {\em KDD}, 2011.

\bibitem{spatial}
G.~Cormode, C.~Procopiuc, D.~Srivastava, E.~Shen, and T.~Yu.
\newblock Differentially private spatial decompositions.
\newblock In {\em ICDE}, 2012.

\bibitem{Daubechies}
I.~Daubechies.
\newblock {\em Ten Lectures on Wavelets}.
\newblock Society for Industrial and Applied Mathematics, 1992.

\bibitem{Dwork_survey}
C.~Dwork.
\newblock Differential privacy: A survey of results.
\newblock In {\em TAMC}, 2008.

\bibitem{Dwork_robust_stat}
C.~Dwork and J.~Lei.
\newblock Differential privacy and robust statistics.
\newblock In {\em STOC}, 2009.

\bibitem{calibrating}
C.~Dwork, F.~McSherry, K.~Nissim, and A.~Smith.
\newblock Calibrating noise to sensitivity in private data analysis.
\newblock In {\em TCC}, 2006.

\bibitem{private_corset}
D.~Feldman, A.~Fiat, H.~Kaplan, and K.~Nissim.
\newblock Private coresets.
\newblock In {\em STOC}, 2009.

\bibitem{datamining1}
A.~Friedman and A.~Schuster.
\newblock Data mining with differential privacy.
\newblock In {\em KDD}, 2010.

\bibitem{kanonymityforDM}
A.~Friedman, R.~Wolff, and A.~Schuster.
\newblock Providing k-anonymity in data mining.
\newblock {\em The VLDB Journal}, 17(4), July 2008.

\bibitem{clusteringsurvey}
B.~C.~M. Fung, K.~Wang, R.~Chen, and P.~S. Yu.
\newblock Privacy-preserving data publishing: A survey of recent developments.
\newblock {\em ACM Comput. Surv.}, 42(4), 2010.

\bibitem{datapublishingforclustering}
B.~C.~M. Fung, K.~Wang, L.~Wang, and P.~C.~K. Hung.
\newblock Privacy-preserving data publishing for cluster analysis.
\newblock {\em Data Knowl. Eng.}, 68(6), 2009.

\bibitem{multiscaling}
P.~Green, F.~J. Carmone, and S.~M. Smith.
\newblock Multidimensional scaling, section five: Dimension reducing methods
  and cluster analysis.
\newblock 1989.
\newblock Addison Wesley.

\bibitem{datamininghan}
J.~Han, M.~Kamber, and J.~Pei.
\newblock {\em Data Mining: Concepts and Techniques}.
\newblock Morgan Kaufmann Publishers Inc., 2011.

\bibitem{Hay}
M.~Hay, V.~Rastogi, G.~Miklau, and D.~Suciu.
\newblock Boosting the accuracy of differentially private histograms through
  consistency.
\newblock {\em PVLDB}, 3(1-2), 2010.

\bibitem{hor88}
B.~K.~P. Horn.
\newblock {\em Robot Vision}.
\newblock The MIT Press, 1988.

\bibitem{privateblocking}
A.~Karakasidis and V.~S. Verykios.
\newblock Reference table based k-anonymous private blocking.
\newblock In {\em SAC}, 2012.

\bibitem{LearnPrivately}
S.~P. Kasiviswanathan, H.~K. Lee, K.~Nissim, S.~Raskhodnikova, and A.~Smith.
\newblock What can we learn privately?
\newblock In {\em FOCS}, 2008.

\bibitem{kotz2001laplace}
S.~Kotz, T.~Kozubowski, and K.~Podg{\'o}rski.
\newblock {\em The Laplace distribution and generalizations : a revisit with
  applications to communications, economics, engineering, and finance}.
\newblock Birkh{\"a}user, 2001.

\bibitem{Kuhn1956}
H.~W. Kuhn.
\newblock Variants of the hungarian method for assignment problems.
\newblock {\em Naval Research Logistics Quarterly}, 3, 1956.

\bibitem{bior}
S.~G. Mallat.
\newblock {\em A Wavelet Tour of Signal Processing. Academic Press}.
\newblock Academic Press, Inc., 1999.

\bibitem{pinq}
F.~McSherry.
\newblock Privacy integrated queries: an extensible platform for
  privacy-preserving data analysis.
\newblock {\em Commun. ACM}, 53(9), 2010.

\bibitem{exponential}
F.~McSherry and K.~Talwar.
\newblock Mechanism design via differential privacy.
\newblock In {\em FOCS}, 2007.

\bibitem{datamining2}
N.~Mohammed, R.~Chen, B.~C. Fung, and P.~S. Yu.
\newblock Differentially private data release for data mining.
\newblock In {\em KDD}, 2011.

\bibitem{smooth}
K.~Nissim, S.~Raskhodnikova, and A.~Smith.
\newblock Smooth sensitivity and sampling in private data analysis.
\newblock In {\em STOC}, 2007.

\bibitem{geospatial}
W.~H. Qardaji, W.~Yang, and N.~Li.
\newblock Differentially private grids for geospatial data.
\newblock In {\em ICDE}, 2013.

\bibitem{c4.5}
J.~R. Quinlan.
\newblock {\em C4.5: Programs for Machine Learning}.
\newblock Morgan Kaufmann Publishers Inc., San Francisco, CA, USA, 1993.

\bibitem{wavecluster}
G.~Sheikholeslami, S.~Chatterjee, and A.~Zhang.
\newblock Wavecluster: A multi-resolution clustering approach for very large
  spatial databases.
\newblock In {\em VLDB}, 1998.

\bibitem{waveclusterjournal}
G.~Sheikholeslami, S.~Chatterjee, and A.~Zhang.
\newblock Wavecluster: A wavelet-based clustering approach for spatial data in
  very large databases.
\newblock {\em VLDB J.}, 8(3-4), 2000.

\bibitem{dbscansocial}
J.~Shi, N.~Mamoulis, D.~Wu, and D.~W. Cheung.
\newblock Density-based place clustering in geo-social networks.
\newblock In {\em SIGMOD}, 2014.

\bibitem{damson}
M.~Winslett, Y.~Yang, and Z.~Zhang.
\newblock Demonstration of damson: Differential privacy for analysis of large
  data.
\newblock {\em ICPADS, IEEE Computer Society}, 2012.

\bibitem{privelet}
X.~Xiao, G.~Wang, and J.~Gehrke.
\newblock Differential privacy via wavelet transforms.
\newblock {\em TKDE}, 23(8), 2011.

\bibitem{histogram}
J.~Xu, Z.~Zhang, X.~Xiao, Y.~Yang, G.~Yu, and M.~Winslett.
\newblock Differentially private histogram publication.
\newblock {\em VLDB J.}, 22(6), 2013.

\bibitem{genetic}
J.~Zhang, X.~Xiao, Y.~Yang, Z.~Zhang, and M.~Winslett.
\newblock Privgene: Differentially private model fitting using genetic
  algorithms.
\newblock In {\em SIGMOD}, 2013.

\bibitem{functionalmechanism}
J.~Zhang, Z.~Zhang, X.~Xiao, Y.~Yang, and M.~Winslett.
\newblock Functional mechanism: Regression analysis under differential privacy.
\newblock {\em PVLDB}, 5(11), 2012.

\end{thebibliography}




%
%
%
%


\end{document}